\newtheorem{theorem}{Theorem}[section]
\newtheorem{lemma}[theorem]{Lemma}
\newtheorem{proposition}[theorem]{Proposition}
\newtheorem{remark}{Remark}
\newtheorem{definition}{Definition}[section]
\newcommand{\ddt}{\frac{d}{dt}}
\newcommand{\bX}{\mathbf{X}}
\newcommand{\Xt}{\mathbf{X}_t}
\newcommand{\Xj}{\mathbf{X}_j}
\newcommand{\vj}{\mathbf{v}_j}
\newcommand{\R}{\mathbb{R}}
\newcommand{\Rp}{\mathbb{R}^p}
\newcommand{\Ct}{\mathcal{C}_{t}}
\newcommand{\indep}{\perp \!\!\! \perp}
\DeclareMathOperator*{\argmin}{arg\,min}
\title{Caliper Synthetic Matching: Generalized Radius Matching with Local Synthetic Controls}
\author{Jonathan Che \thanks{Department of Statistics, Harvard University} \and Xiang Meng \footnotemark[1] \and Luke Miratrix \thanks{Harvard Graduate School of Education}}
\begin{document}

\maketitle

\begin{abstract}
Matching promises transparent causal inferences for observational data, making it an intuitive approach for many applications.
In practice, however, standard matching methods often perform poorly compared to modern approaches such as response-surface modeling and optimizing balancing weights.
We propose Caliper Synthetic Matching (CSM) to address these challenges while preserving simple and transparent matches and match diagnostics.
CSM extends Coarsened Exact Matching \citep[CEM; ][]{iacus2012causal} by incorporating general distance metrics, adaptive calipers, and locally constructed synthetic controls.
We show that CSM can be viewed as a monotonic imbalance bounding \citep[MIB; ][]{iacus2011multivariate} matching method, so that it inherits the usual bounds on imbalance and bias enjoyed by MIB methods.
We further provide a bound on a measure of joint covariate imbalance.
Using a simulation study, we illustrate how CSM can even outperform modern matching methods in certain settings, and finally illustrate its use in an empirical example.
Overall, we find CSM allows for many of the benefits of matching while avoiding some of the costs.
\end{abstract}

\section{Introduction}

The ``Jovem de Futuro'' (Young of the Future) program of Brazil aimed to improve education quality by providing strategies and, for those who made certain targets, monetary support to schools in Rio de Janeiro and Sao Paulo, Brazil.
Education reform efforts like this are common, and when they occur various stakeholders are typically anxious to understand if such a large commitment of resources actually improved circumstances on the ground.

In the case of Jovem de Futuro, researchers ensured the ability to answer this question by taking a step beyond what is typically seen in practice: they recruited schools and randomized them into treatment and control, thus enabling direct and effective estimation of impacts by comparing two groups that were otherwise similar, save for treatment \citep{barros2012impacto}.
But getting agencies to implement a randomized controlled trial is often a hard sell: RCTs require supports beyond those of installing a program in the first place.
When, as is common in practice, there is no randomization to ensure one has a collection of comparable schools, researchers will instead turn to observational study methods to, in effect, construct a comparison group that can serve as a counterfactual for those schools who took treatment.

Matching provides a simple approach for such an endeavor.
In a basic setting, matching methods pair each treated unit with a similar control unit, producing a matched control sample that mirrors the treated sample in terms of observable covariates.
Under standard assumptions, the samples may then be analyzed as if treatment were randomly assigned.
The simplicity of this approach has led matching to be a popular method for observational causal inference \citep{imbens2004nonparametric, imbens2015causal}.

For conceptual clarity, the gold standard of matching methods is exact matching.
For each treated unit, exact matching finds a control unit with the same observed covariates.
We then end up with a control group that looks, in terms of the covariates, exactly like the treatment group, except for receipt of treatment.
In other words, exact matching perfectly balances the joint covariate distribution between the treated and matched control units, eliminating any potential bias due to associations between observable covariates and potential outcomes \citep{imai2008misunderstandings, rosenbaum1985bias}.
Exact matching also produces transparent matched datasets,
since the difference in outcomes between a treated unit and its exactly matched control is an unbiased (albeit noisy) estimate of the treatment effect for that unit.
This leads to the familiar statistical idea of averaging noisy observations to estimate a target estimand, e.g., the average treatment effect among all (or some subset) of the treated units.

In practice, however, exact matching is usually impossible.
Researchers have therefore developed a variety of methods for conducting principled causal inference without exact joint covariate balance.
For example, many matching approaches aim to balance the marginal distributions.
Researchers construct a matched dataset, check the marginal means of the matched sets, and then iteratively tweak and refine their model if the means are too different.
Direct balancing approaches \citep[e.g., ][]{hainmueller2012entropy, zubizarreta2015stable, ben2021balancing} improve on this ad-hoc procedure by directly targeting approximate balance on specified features of the marginal, or sometimes even joint, distribution.
Alternatively, one can use dimension reduction \citep[e.g., ][]{aikens2020pilot}, to make it easier to match ``locally.''
All of these forms of matching can be viewed as a preprocessing step \citep{ho2007matching} done before applying an outcome model, e.g., weighted linear regression, to the matched dataset; the final model can further adjust for remaining covariate imbalances.
Ideally the matching reduces the impact of model selection on the final estimates.

Matching and outcome modeling can also be done concurrently.
Semiparametric modeling approaches, such as doubly robust \citep{robins1994estimation, rotnitzky1998semiparametric}, double machine learning \citep{chernozhukov2018dml}, or targeted maximum likelihood estimation \citep{van2006targeted}, use model-assisted averages to target the estimand of interest, leading to provably efficient and unbiased estimates.
One can also simply directly model the outcome \citep[e.g., ][]{hill2011bayesian}, to similar effect.

The methods discussed above generally yield effective causal estimates by aiming for overarching objectives across all units, such as achieving marginal covariate balance or ensuring good model fit.
Even if they attain marginal covariate balance, however, they can fail to achieve joint balance, and this can lead to a poor result, as highlighted in \cite{iacus2011multivariate}.
Furthermore, such methods may not be particularly transparent, or can heavily depend on model specification, as noted in \cite{iacus2012causal}.
Exact matching, by contrast, targets joint balance, reduces model dependency and, ideally, increases transparency.
In this paper, we build on a body of literature that focuses on this original spirit of exact matching.
We provide four opinionated maxims for methods in this literature:
\begin{enumerate}
    \item Distances should be intuitive
    \item Matches should be local
    \item Match each unit as best you can in a way that you can monitor
    \item Estimates should be transparent
\end{enumerate}
Given our maxims, we then directly construct Caliper Synthetic Matching (CSM) to effectively and clearly address each of these ideas. 

The CSM method is a member of the Monotonic Imbalance Bounding \citep[MIB; ][]{iacus2011multivariate} class. MIB is a category of methods ensuring joint balance. When introducing Monotonic Imbalance Bounding (MIB), the authors indicated that while distance matching with a single scalar caliper does not constitute MIB, employing a distinct caliper for each covariate does. This paper explores this idea further by formalizing the scaled distance approach and examining its properties, specifically through a scaled distance metric. The scaled distance allows for specific control over each covariate, mitigating issues such as sensitivity to outliers inherent in similar methods such as Mahalanobis distance.

We also connect CSM to Coarsened Exact Matching \citep[CEM; ][]{iacus2012causal}.
CEM, a method within the MIB class, coarsens continuous covariates \footnote{E.g., dividing an age covariate into ranges such as 0-25, 25-50, 50-75, 75+} before exactly matching on these coarsened covariates.
We show how CSM keeps the MIB properties of CEM, but gives stricter bounds on worst-case bias.
Within this framework we also provide a novel theoretical result on the control of joint balance, along with an alternate derivation of bias control than in the original literature.

Another innovation of this paper is we, post matching, apply a variant of the synthetic control method \citep[SCM; ][]{abadie2010synthetic} to each treated unit in turn to refine the weights given to the control units matched to each treated unit.
The SCM method's goal is to create a single synthetic control unit for each treated unit that mirrors the treated unit’s characteristics and outcomes.
The synthetic control step further minimizes covariate imbalance over the matching step with respect to a scaled distance metric.
We show, via a Taylor expansion in a space defined by the used distance metric, how this leads to removing a linear bias term in the estimated impact in a manner akin to linear interpolation.
Our use of SCM diverges from standard SCM by allowing for a scaled distance metric, not necessarily using historical outcomes, and employing a predetermined $V$ matrix based on covariate-wise calipers.
We argue synthetic controls maintain the transparency of matching by enabling direct evaluation of counterfactual plausibility.

Our matching approach allows for a simple form of inference that accounts for the reuse of controls for different treated units.
Uncertainty estimation in matching methods is difficult, as underscored by Abadie and Imbens \citep{abadie2006large, abadie2008failure, abadie2011bias}, who illustrated the limitations of the bootstrap method and developed an asymptotically valid approach for the $M$-nearest neighbor estimator.
We follow \citet{ben2022synthetic} (also see \cite{keele2023hospital} and \cite{lu2023you}), using the resultant weights of the final matched control units along with a plug-in variance formula.
In particular, we use the residuals within matched clusters to estimate residual uncertainty, which then provides a variance estimator that takes into account the overall effective sample size imposed by the final control weights. 

The rest of the paper proceeds as follows.
In Section \ref{sec:background}, we provide background and motivate CSM with a toy example.
In Section \ref{sec:CSM}, we construct CSM in a stepwise manner to address the four maxims proposed above.
We then discuss the properties of CSM in Section \ref{sec:properties}, and provide our theoretical results on bias control.
In Section \ref{sec:simulation}, we conduct a variety of simulation studies to illustrate how CSM performs compared to other observational causal inference methods.
Finally, in Section \ref{sec:ferman-data-analysis} we work through the applied example of the Jovem de Futuro program, analyzing it as a within study comparison to demonstrate how CSM may be used in practice, and to assess its performance on real data.

\section{Background}
\label{sec:background}

\subsection{Setup}

Suppose we have $n$ independent and identically distributed observations, with $n_T$ treated units and $n_C$ control units.
For each unit $i$, let $Z_i \in \{0,1\}$ denote its binary treatment status, $Y_i \in \R$ denote its observed real-valued outcome, and $\mathbf{X}_i \equiv \{X_{1i}, \dots, X_{pi} \}^T \in \Rp$ denote its $p$-dimensional real-valued covariate vector.
We use the potential outcomes framework and denote the observed outcome for unit $i$ as $Y_i \equiv (1-Z_i) Y_i(0) + Z_i Y_i(1)$, for potential outcomes $Y_i(1)$ and $Y_i(0)$ under the stable unit treatment value assumption \citep{imbens2015causal}.
We make the standard conditional ignorability assumption:
\begin{align*}
    (Y(1), Y(0)) \indep Z \mid \bX,
\end{align*}
so that conditioning on the observed covariates is sufficient to identify the causal effect of $Z$.
Under a population sampling framework, we write $\epsilon_i \equiv Y_i(Z_i) - f_{Z_i}(\bX)$, where $f_0(\bX) \equiv E[Y(0) | \bX]$ and $f_1(\bX) \equiv E[Y(1) | \bX]$ are the true conditional expectation functions of the potential outcomes under control and treatment, respectively.
We write the set of all treated units' indices as $\mathcal{T} = \{i: Z_i=1\}$, the set of all control units' indices as $\mathcal{C} = \{i: Z_i=0\}$, and $t \in \mathcal{T}$, $j \in  \mathcal{C}$ as individual treated and control units respectively.  We denote the set of the indices of the control units matched to a treated unit $t \in \mathcal{T}$ as $\Ct = \{j \in \mathcal{C}: \text{ unit } j \text{ is matched to unit } t\}$.
Finally, we denote the size of a set $\mathcal{S}$ as $|\mathcal{S}|$.

In this paper, we will focus on estimating the sample average treatment effect on the treated (SATT):
\begin{align*}
    \tau = \frac{1}{n_T} \sum_{t \in \mathcal{T}} \left( Y_t(1) - Y_t(0) \right) ,
\end{align*}
where $n_T = |\mathcal{T}|$ denotes the number of treated units.

Under a population sampling framework, the SATT approaches the overall population average treatment effect on the treated (PATT) as the number of treated units increases.
While matching methods can easily be extended to estimate sample and population average treatment effects (SATEs and PATEs), we focus on the SATT to clarify key ideas and simplify exposition.

\subsection{Motivation: the spirit of exact matching}
\label{sec:toy}

To motivate the importance of locality and joint balance, we provide a pair of toy examples.
Figure \ref{fig:toy} plots the covariates $X_1$ and $X_2$ of control units $c_i$ and treated units $t_j$ for two scenarios.
The colors and contours visualize $f_0(\cdot)$, which takes on greater values within the innermost contour.
To conduct causal inference, we use the observed outcomes of the control units $c_i$ to impute the unobserved counterfactual outcomes of the treated units $t_j$.

\begin{figure}[t]
    \centering
    \begin{subfigure}[ht]{0.4\textwidth}
         \centering
         \includegraphics[width=\textwidth]{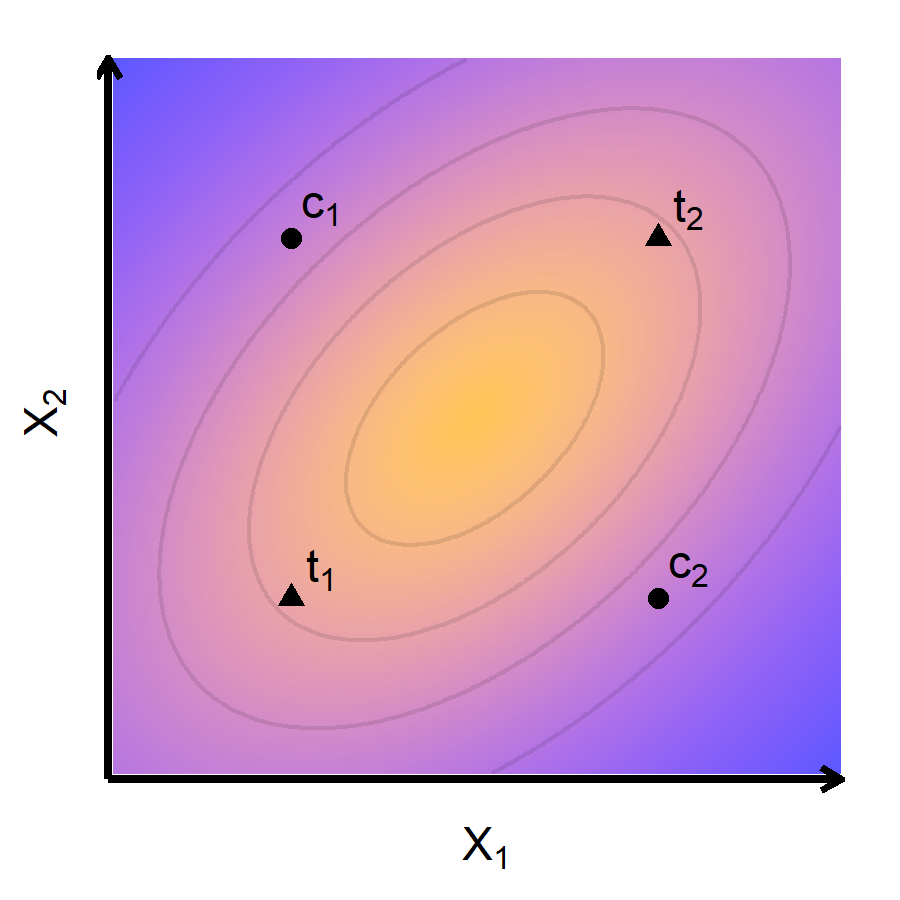}
         \caption{Toy Example 1}
         \label{fig:toy1}
     \end{subfigure}
     \hspace{5mm}
     \begin{subfigure}[ht]{0.4\textwidth}
         \centering
         \includegraphics[width=\textwidth]{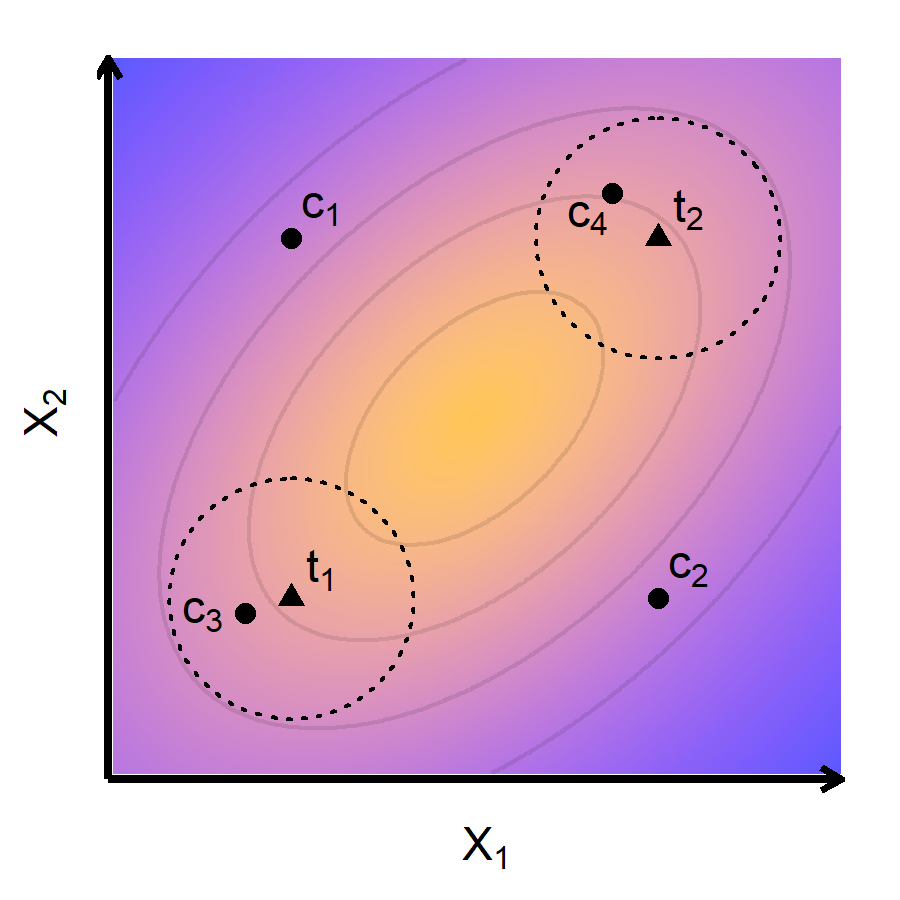}
         \caption{Toy Example 2}
         \label{fig:toy2}
     \end{subfigure}
    \caption{Toy examples demonstrating utility of locality and joint balance.}
    \label{fig:toy}
\end{figure}

For Example 1 (Figure \ref{fig:toy1}), accurate causal inference is not possible due to a lack of overlap: we cannot accurately impute outcomes for $t_1$ and $t_2$ because we do not observe any nearby evaluations of $f_0(\cdot)$.
Many causal inference methods, however, could fail to detect this problem.
For example, a matching or balancing approach that works with marginal distributions could exactly balance both $X_1$ and $X_2$ by assigning weights of 1 to all four units.
Similarly, an outcome model may fit a flat response surface to the observed outcomes of $c_1$ and $c_2$.
Because the usual marginal balance checks and outcome models would appear good, the analyst would be left unaware that these analyses significantly underestimate counterfactual outcomes, i.e., overestimate the SATT.
Approximate exact matching approaches would also fail here, as there are no control units close to the treated units.
The failure to find local matches, however, would alert the analyst to the presence of a problem.
Therefore, it is important to have a metric on how good controls each treatment can obtain.
In our method, we use nearest neighbor distance as this quality metric. 

Example 2 (Figure \ref{fig:toy2}) highlights the role of transparent local matches.
Because $f_0(\cdot)$ is smooth, using only the outcomes of local control units (i.e., controls $c_3$ and $c_4$ within the dotted circles) leads to accurate counterfactual estimates.
Local matches also produce transparent estimates;
unlike using a black-box outcome model, it is immediately clear how each control unit contributes to each treated unit's counterfactual estimate.
Finally, local matches encourage joint balance.
In Example 2, if we assigned weights of 1 to $c_1$ and $c_2$, and 0 to $c_3$ and $c_4$, we would have perfect marginal balance, but poor local match quality and a poor counterfactual.
It is worth sacrificing some marginal balance to upweight the closer units $c_3$ and $c_4$, since doing so greatly improves joint balance and the resulting treatment-effect estimates.

Having illustrated the importance of using distance matching to achieve local matches, we now turn to the selection of the distance metric.
The Mahalanobis distance has been commonly used in the literature, but it has drawbacks, including sensitivity to outliers and the curse of dimensionality. Outliers can distort the covariance matrix upon which Mahalanobis distance relies, leading to poor match quality.
Moreover, with an increase in the number of covariates, the effectiveness of Mahalanobis matching may diminish due to the curse of dimensionality, where units in high-dimensional spaces appear far apart, complicating the search for suitable matches.

To partially address these issues, we propose a more flexible metric, the scaled distance, which permits analysts to specify the desired tightness of matching for each covariate in turn.
Regions of low overlap can then be handled via an adaptive caliper method as detailed in Section \ref{sec:avoid}.
The challenge of dimensionality is also reduced, given prior knowledge on the relative importance of the covariates, as covariates of interest can be scaled to achieve higher level of control; see Section \ref{sec:dists}.

Lastly, while these toy examples demonstrate the benefits of adhering to the "spirit of exact matching" by finding local matches to enhance causal estimates and identify potential biases, it is important to acknowledge that in some, possibly even many, situations, focusing on joint balance in this manner may be unnecessary or even detrimental.
For example, if the conditional expectation function is additive in its covariates, i.e., $f_0(\bX) = \sum_{j=1}^p g_j(X_j)$ for functions $g_j(\cdot):\R \to \R$, marginal balance suffices \citep{zubizarreta2015stable}.
In such settings, attempting to balance the full joint covariate distribution is excessively conservative, as it protects estimates from bias that does not exist, namely, bias due to imbalances in  covariate interactions.

\subsection{Related work}
\label{sec:related}

Matching methods have a long history in observational causal inference \citep{stuart2010matching}.
Rather than attempt an exhaustive review, we briefly trace how these methods have operationalized the spirit of exact matching over time and provide further details in Section \ref{sec:CSM} as we develop the method introduced in this paper.

Early work in matching incorporated locality via nearest-neighbor \citep{rubin1973matching}, caliper \citep{cochran1973controlling, rosenbaum1985constructing}, and radius matching \citep{dehejia2002propensity} approaches.
These methods were typically combined with dimension reduction, e.g., via propensity scores \citep{rosenbaum1983central}, to circumvent the challenge of near-exact matching with multiple covariates, though some approaches, e.g., Mahalanobis distance matching \citep{rubin2000combining} and genetic matching \citep{diamond2013genetic}, directly operated on a scaled version of the original covariate space.
Also see \cite{aikens2020pilot}, who use set-aside data to fit a prognostic score, and then match on that score.
Regardless, to evaluate and improve the quality of a matched dataset, researchers would typically conduct iterative balance checks, revising their matching scheme if it led to poor marginal mean balance.

To circumvent the need for these iterative balance checks, \citet{iacus2012causal} introduced Coarsened Exact Matching (CEM), which we discuss in further detail in Section \ref{sec:close}.
CEM fixes a user-specified level of joint balance, operationalized by a given degree of coarsening, and returns the resulting sample.
By making local matches a primary rather than a secondary criterion, CEM enjoys the desirable transparency and joint balance properties of exact matching.

More recently, researchers have developed matching methods that flexibly learn notions of locality and similarity rather than using user-specified defaults.
Genetic matching \citep{diamond2013genetic} learns to more precisely match covariates that appear to be important for overall covariate balance, while ``almost-exact'' matching methods for discrete \citep{dieng2019interpretable, wang2021flame} and continuous \citep{morucci2020adaptive, parikh2022malts} data aim to more precisely (or exactly) match covariates that are more predictive of the potential outcomes.
Matching After Learning to Stretch \citep{parikh2022malts} minimizes predictive error over a hold-out training set to identify a similarity metric, and synthetic controls \citep{abadie2010synthetic} minimize the resulting predictive error with respect to a set of held-out pre-treatment outcome variables to obtain optimal variable importance weights in a time-series setting.

Work outside of matching has also noted the importance of locality in observational causal inference.
For example, \citet{abadie2021penalized} augments the popular synthetic controls methodology with a penalty for using control units far from the treated unit.
In a similar setting, \citet{ben2021augmented} tunes the extent to which synthetic controls may extrapolate away from the control units.
More generally, \citet{kellogg2021combining} explicitly trades off the bias from extrapolating beyond local matches with the bias from linearly interpolating between distant units.
While these approaches do not attempt to directly emulate exact matching, they highlight the use of local data for principled causal inference.

In another direction, one can directly optimize weights with respect to overall balance criterion, rather than engage in the iterative cycle of matching with a subsequent balance check on the result.
For three examples of many, see stable balance weights \citep[SBW; ][]{zubizarreta2015stable}, entropy balance weights \citep{hainmueller2012entropy},
or covariate balancing propentiy scores \citep{imai2014covariate}.

\section{Caliper Synthetic Matching}
\label{sec:CSM}

\citet{stuart2010matching} decomposes matching analyses into two phases.
In the design phase, researchers select a distance measure, use it to run a matching method, and diagnose the quality of the resulting matches.
In the subsequent analysis phase, researchers use the matched units to estimate treatment effects.

In this section, we propose a matching method that satisfies our set of aesthetic maxims delineated above.
We construct our method in a modular fashion;
at each stage, we increase the complexity of the method to improve its use of the data.
While we believe that the final proposed method simultaneously maximizes transparency and performance, in practice researchers may make different choices at each stage to limit complexity as necessary.

\subsection{Principle 1: Distances should be intuitive}
\label{sec:dists}

In the absence of exact matches, matching algorithms find control units as ``close'' as possible to their treated counterparts to improve joint covariate balance and reduce potential bias \citep{rosenbaum1985bias}.
One popular distance measure for ``closeness'' is propensity score distance:
\begin{align}
\label{eq:ps}
    d^{(e)}(\bX_i, \bX_j) = |e(\bX_i) - e(\bX_j)|,
\end{align}
where the propensity score $e(\cdot) = P(Z_i = 1 \mid \bX_i)$ represents the probability that a unit is treated, given its covariates \citep{rosenbaum1983central}.

While matching units with similar propensity scores leads to principled causal estimates, it does not construct intuitive matched sets  \citep{king2019propensity}.
Propensity score distance is not formally a distance metric on $\Rp$,\footnote{For example, $d^{(e)}(\bX_i, \bX_j) = 0$ does not imply that $\bX_i = \bX_j$.
For a simple introduction to distance metrics on $\Rp$, see Appendix \ref{app:distmetrics}.}
so it can violate our natural understanding of ``closeness.''
For example, two units that are ``close'' in terms of propensity score distance, which simply means they have similar chances of getting treatment, may have very different covariate values.
If a researcher matches two such units, it can be unclear whether they should trust this match, fit a better propensity-score model, or find a closer propensity-score match.

Formal distance metrics provide a more natural approach for assessing similarity between units.
Researchers are generally familiar with the covariates in their data, 
so directly attaching a distance metric to the space of covariates builds upon existing intuitions.
Multidimensional distance metrics typically take the form of a scaled Euclidean (i.e., $L_2$) distance metric:
\begin{align}
\label{eq:l2dist}
    d^{(2)}_V(\bX_i, \bX_j) = \sqrt{(\bX_i - \bX_j)^T (V^T V) (\bX_i - \bX_j)}
\end{align}
or scaled $L_\infty$ distance metric:
\begin{align}
\label{eq:linfdist}
    d^{(\infty)}_V(\bX_i, \bX_j) 
    = \sup_{k = 1, \dots, p} |V (\bX_i - \bX_j)|_k,
\end{align}
for a given $p \times p$ symmetric positive definite matrix $V$.
The matrix $V$ scales the raw differences in the covariates and their two-way interactions.
For example, if $V_{11}$ were large, then the resulting distance metric would magnify, i.e., upweight, differences in the first covariate. We consider both scaled $L_2$ and scaled $L_\infty$ distance metrics in this paper.

A variety of scaled distance metrics have been proposed in the literature.
One popular scaled $L_2$ distance metric is Mahalanobis distance, which uses $V^T V = \Sigma^{-1}$, the inverse covariance matrix estimated from the control group \citep{rubin1980bias}.
Other $L_2$ approaches, such as genetic matching \citep{diamond2013genetic} restrict the scale matrix $V$ to be diagonal and directly optimize it.

In this paper, we set $V$ to a diagonal matrix determined by a covariate-wise caliper specified by a researcher, as formalized in Proposition \ref{prop:distmetriccal}.
A covariate-wise caliper, $\pi_k$, represents the maximum allowable difference in the $k$-th covariate for matching purposes.
When the $L_\infty$ metric $d_V^{\infty}$ is set to $c$, the $k$-th covariate difference between the two units is at most $c \pi_k$.
The is equivalent to scaling each covariate $k$ by $V_{kk} = 1/\pi_k$ and then  matching using the $L_\infty$ distance with threshold $c$.
Generally, we advocate setting $\pi$ so that any pair of units with distance bounded by $c=1$ is considered to be a ``reasonably similar'' comparison.

Using $d_V^\infty$ with a covariate-wise caliper is strongly connected to Coarsened Exact Matching (CEM) \citep{iacus2012causal}, as the distance metric provides researchers with the desired degree of covariate-wise control.
In CEM, continuous variables are initially coarsened into discrete bins; for example, an age variable may be segmented into bins of 0-25, 25-50, 50-75, and 75-100 years.
Observations are then precisely matched based on these coarsened covariates.
By setting a covariate-wise caliper to 12.5, we allow a treated unit to be matched with a control unit if their age difference is no more than 12.5 years.
This approach generalizes the concept of coarsening, as coarsening, given the above bins, would not match a 26-year-old with a 24-year-old, but our method could. Further details are discussed in Section \ref{sec:compCEM}.

\subsection{Principle 2: Matches should be local}
\label{sec:close}

Section \ref{sec:toy} argued the importance of local matching. Here we discuss ways of achieving local matching by mainly revisiting the literature.

Given a chosen measure of ``closeness,'' there are many ways to select the closest matched units.
One popular approach is nearest-neighbors matching, where each treated unit is matched with the control unit(s) closest to it.
This can be done either greedily for each treated unit \citep{rubin1973matching} or optimally over all treated units \citep{rosenbaum1989optimal}.
Nearest-neighbors matching can be conducted after some preprocessing steps, such as learning an optimal distance metric \citep{diamond2013genetic, parikh2022malts}.

While nearest-neighbors approaches are intuitive, they can quietly fail to achieve covariate balance.
While nearest-neighbors approaches guarantee that each treated unit is matched with its closest control, the closest control may still be quite far off.
Large distances between treated units and their matched controls, i.e., low match quality, can lead to poor joint covariate balance.

To combat this problem, many methods apply calipers coupled with nearest-neighbor matching.
Calipers are a distance $c$ beyond which matches are forbidden.
Calipers modify the distance metric as:
\begin{align*}
    d(\bX_i, \bX_j) = 
    \begin{cases}
        d(\bX_i, \bX_j) &\text{if } d(\bX_i, \bX_j) \leq c \\
        \infty &\text{if } d(\bX_i, \bX_j) > c
    \end{cases}
\end{align*}
Using calipers, nearest-neighbor approaches can avoid problems associated with poor match quality, using the closest matches only if they are ``close enough.'' The simplest form of nearest neighbor matching does not, however, take full advantage of data rich areas of the sample: if there are many controls, using all of them rather than the strictly closest could improve precision.

Calipers are frequently applied to the propensity score, rather than to covariate distance.
Unfortunately, units that are local with respect to assignment probability may not be actually local with respect to their covariates; see \citet{king2019propensity}.

The distance metric and the caliper are quite connected; in particular, if we scale our distance metric, we can simply scale the caliper by the same amount.
This means $c=1$ indicates a caliper of one unit on the distance metric, which under the infinity-norm means no covariate differs by more than one ``width'' as defined by $\pi$.
Thus, without loss of generality, we generally assume an initial value of $c = 1$ hereafter, unless otherwise noted.

In this paper, we directly use all control units within a given caliper of each treated unit, which maximizes the number of local matches used to produce causal estimates.
We allow matching with replacement, meaning a control unit close to multiple treatment units will match to all of then.
This matching approach is known as radius matching \citep{dehejia2002propensity} with replacement, though previous proposals mostly use radius matching on propensity score distances.
We relegate discussion of caliper selection to Appendix \ref{app:caliperchoice}.

\subsection{Principle 3: Match each unit as best you can in a way that you can monitor}
\label{sec:avoid}

Matching with a fixed caliper could lead to a substantial fraction of the treated units being dropped, which can significantly change the target estimand. 
In particular, dropping difficult-to-match treated units, while potentially improving the quality of the resulting estimate, changes the estimand from the SATT to the feasible sample average treatment effect (FSATT):
$$\tau_\mathcal{F} = \frac{1}{|\mathcal{F}|} \sum_{t \in \mathcal{F}} Y_t(1) - Y_t(0),$$
where $\mathcal{F}$ denotes the set of indices of treated units with at least one control unit within $c$ $d_V^{(\infty)}$ units, i.e., $\mathcal{F} = \{t \in \mathcal{T}: \exists \ j \in \mathcal{C} \text{ with } d_V^{(\infty)}(\Xt, \Xj) \leq c\}$.
The FSATT can differ from the SATT if the excluded units have systematically different treatment impacts than the kept units.

Instead of shifting the estimand based on a selected caliper, we propose instead assigning each treated unit $t$ an adaptive caliper $c_t$, with
\begin{equation}
    c_t = \max \{c, \alpha d_t\} \mbox{ with } d_t \equiv \min_{j:Z_j=0} d(\bX_t, \bX_j),
\end{equation}
where $c$ is our global minimum caliper that we default to 1 given our distance metric, $d_t$ is the distance between unit $t$ and its nearest control-unit neighbor \citep{dehejia2002propensity}, and $\alpha \geq 1$ is an inflation factor that allows for catching all control units that are similarly close to the closest unit.

The adaptive caliper guarantees all treated units are matched to at least one control.
The floor $c$ value allows for taking advantage of data-rich environments; for treated units with many controls, we will not shrink to an overly small caliper but instead keep all those controls measured as ``reasonably similar'' as a comparison group.

In data-rich contexts, the adaptive caliper may also be selected so that the resulting matched sets work well with synthetic controls (introduced below), e.g., we can set $c_t$ to be the smallest caliper such that treated unit $t$ has $p+1$ within-caliper controls, where $p$ is the dimension of the covariate space. 

In principle, with adaptive calipers our matched dataset allows for direct estimate of the SATT as all treated units are preserved.
In practice, some treated units may have very poor matches, which would be indicated by large $c_t$ values.
Because we have a direct measure of match quality via $c_t$, we can monitor the impact of these poor matches on our overall estimand, and possibly deliberately trim based on our diagnostics.

An important step in any matching procedure is to assess the resulting matches.
Classic approaches would be to conduct balance checks by comparing marginal covariate distributions between the treated and control groups (usually people will look at standardized mean differences).
Such marginal balance checks may reveal significant departures from joint balance, but cannot confirm when joint balance is approximately achieved, as demonstrated in Toy Examples 1 and 2.
Checking low-dimensional summaries of joint balance can also fail to assess overlap or identify subsets of the treated units for which it may be easier or more difficult to estimate treatment effects.

Using a distance-metric caliper, on the other hand, directly ensures good covariate balance (e.g., see Proposition \ref{prop:wass}), if the caliper is fixed.
We can extend this idea to using the unit-specific caliper values $c_t$ across all treated units to assess the estimate-estimand tradeoff, creating balance-sample-size frontier plots \citep{king2017balance} to show how dropping poorly matched treated units (i.e., those with large $c_t$) affects both potential bias and the SATT estimate for the remaining sample.
Also see \citet{aikens2020pilot, aikens2023assignment} for other approaches to making diagnostic plots.
We can also summarize characteristics of units with different $c_t$ values to better understand regions with poorer overlap.

\subsection{Principle 4: Estimates should be transparent}

Given matched units from the design phase of a matching analysis, the final step is to produce an estimate.
With high-quality matches, the SATT may be, in principle, estimated with a simple average:
$$\hat{\tau}^{avg} = \frac{1}{n_T} \sum_{t \in \mathcal{T}} \left( Y_t - \frac{1}{|\Ct|} \sum_{j \in \Ct} Y_j \right).$$
This is quite clear, but when matching is imperfect, it can be much lower performing than other methods that do further adjustment.

In this paper, instead of using the average weight $\frac{1}{|\Ct|} $ for each matched control,  we calculate weights with the synthetic control method \citep[SCM; ][]{abadie2010synthetic} within the matched sets.
For each treated unit $t$, we find convex weights, i.e., weights $w_{jt} \geq 0$ that sum to 1, for its matched control units that minimize covariate imbalance as measured by a scaled distance metric $d_V(\cdot, \cdot) = d_V^{(2)}(\cdot, \cdot)$ or $d_V^{(\infty)}(\cdot, \cdot)$:
\begin{align*}
    \mathbf w_t := \argmin_{\{w_{jt} : j \in \Ct\}} 
        &\hspace{2mm} d_V(\Xt, \sum_{j \in \Ct}w_{jt} \Xj) \\
    \text{s.t. } 
        &\sum_{j \in \Ct} w_{jt} = 1 \\
        &0 \leq w_{jt} \leq 1 \text{ for } j \in \mathcal{C}_t,
\end{align*}
where we have written the weight for control unit $j$ associated with treated unit $t$ as $w_{jt}$.
The ``synthetic control'' unit for treated unit $t$ gets covariates $\sum_{j \in \Ct}w_{jt} \Xj$ and outcome $\sum_{j \in \Ct}w_{jt} Y_j$,
and the ATT estimate is taken as a simple difference-in-means between the outcomes of the treated units and their synthetic controls:
\begin{align*}
    \hat{\tau}_t^{SC} = \frac{1}{n_T} \sum_{t \in \mathcal{T}} \big(Y_t - \sum_{j \in \Ct} w_{jt} Y_j \big).
\end{align*}
See Appendix~\ref{app:scm} for further detail.

Our approach deviates from the standard SCM setup in a few ways.
First, while \citet{abadie2010synthetic} introduces SCM as a quadratic programming problem using scaled $L_2$ distance, we also allow SCM to be implemented as a linear programming problem using scaled $L_\infty$ distance.
Second, while synthetic controls are typically used in time-series settings with past outcomes as additional covariates, we directly apply them in our setting without focus on past outcomes.
Finally, the SCM introduced in \citet{abadie2010synthetic} includes an outer optimization to learn an ``optimal'' scaling matrix $V$, whereas we simply use the $V$ matrix implied by the given covariate-wise caliper.
This follows other approaches to the SCM. See, e.g., \cite{ben2021augmented}.

Using synthetic controls in the analysis phase provides two primary benefits.
First, synthetic controls are highly transparent. 
Because synthetic controls explicitly produce a counterfactual for each treated unit, researchers can directly check whether each counterfactual seems reasonable. 

Second, as we prove, synthetic controls naturally reduce bias within calipers as they are mathematically equivalent to linear interpolation.
While linear interpolation over long distances can lead to bias, interpolating over short distances, such as within a caliper, typically improves results due to the local linearity of smooth outcome functions.

\subsection{The full method}

To recap, we propose matching by first selecting an interpretable distance metric that can be used to assess similarity between units based on their covariate profiles.
This metric should be interpretable and justifiable by those using it.

Once a metric is in place, identify sets of local controls for each treated unit, possibly adaptively adjusting the caliper (radius) defining locality unit by unit.
Finally, build synthetic comparison units by using a variant of the synthetic control procedure for each unit to obtain further reductions in bias.
The final estimate is simply the average of the pairwise differences between treated and synthetic control.
Finally, run some diagnostics, examining measures such as how the estimated impact shifts with the trimming of hard-to-match units.

We denote the use of synthetic controls within adaptive $L_\infty$ calipers as Caliper Synthetic Matching (CSM).
We generally use a scaled $L_\infty$ distance for its simplicity, its connections to exact matching, and its connections to CEM.
Adapting the caliper enables clear diagnostic plots of the estimate-estimand tradeoff and the synthetic controls produce interpretable local bias corrections.

\section{Theoretical Properties}
\label{sec:properties}
\subsection{Monotonic Imbalance Bounding}
\label{sec:mib}

\citet{iacus2011multivariate} introduces the Monotonic Imbalance Bounding (MIB) class of matching methods.
MIB matching methods directly control covariate balance between the treated and matched control groups, independently for each covariate. 
As a result, MIB matching methods enjoy desirable properties such as bounded covariate imbalance and bounded estimation error, under reasonable assumptions.

Although matching using Mahalanobis distance alone is not MIB, matching using a distance-metric caliper is MIB as long as the caliper for each covariate may be tuned without affecting the caliper for any other covariate \citep{iacus2011multivariate}.
Any such covariate-wise caliper can be satisfied by a scalar caliper on an appropriately scaled distance metric.
For example, if $p=2$ and we want to ensure $|X_{t1} - X_{j1}| \leq 2$ and $|X_{t2} - X_{j2}| \leq 5$, we may define $V = \begin{bmatrix} \frac{1}{2} & 0 \\ 0 & \frac{1}{5} \end{bmatrix}$ so that restricting $d^{(\infty)}_V(\Xt, \Xj) \leq 1$ satisfies the desired caliper.
By the triangle inequality, the bound on the distance translates to bounds on the individual covariates.
This idea is formalized by Proposition \ref{prop:CSM-is-MIB}, which shows that matching using a distance-metric caliper is a member of the MIB class.

Denote the weighted mean for covariate $k$ for the treated and control units after matching as $\bar{X}^{\mathbf{w}}_{Tk}$ and $\bar{X}^{\mathbf{w}}_{Ck}$, respectively, where the weight is obtained from matching.
More detailed definitions can be found in Appendix \ref{app:mib}.
We then have:

\begin{proposition}
\label{prop:CSM-is-MIB}
    Given covariatewise caliper $\boldsymbol{\pi} \in \Rp_{+}$ and scaling matrix $V = diag\{\frac{1}{\boldsymbol{\pi}}\}$, we have 
    \begin{enumerate}[label=(\alph*)]
        \item $d^{(2)}_V(\Xt, \Xj) \leq 1$ for all  $t \in \mathcal{M}_T, j \in \Ct$  $\implies |\bar X^{\mathbf w}_{T k}- \bar X^{\mathbf w}_{C k}| \leq \pi_k \text{ for } k=1,\dots,p $
        \label{prop:dmca}
        \item $ d^{(\infty)}_V(\Xt, \Xj) \leq 1 $ for all  $t \in \mathcal{M}_T, j \in \Ct$   $\implies  |\bar X^{\mathbf w}_{T k}- \bar X^{\mathbf w}_{C k}| \leq \pi_k \text{ for } k=1,\dots,p $
        \label{prop:dmcb}
    \end{enumerate}
\end{proposition}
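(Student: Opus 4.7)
The plan is to reduce both claims to the same two-step argument: (i) a pairwise covariate-wise bound derived directly from the distance-metric caliper, and (ii) a convexity/averaging step that lifts the pairwise bound to the weighted means. The only thing that differs between (a) and (b) is how (i) is obtained.

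For step (i), note that with $V = \mathrm{diag}\{1/\boldsymbol{\pi}\}$ the scaled differences are $V(\mathbf{X}_t - \mathbf{X}_j)$ with $k$-th coordinate $(X_{tk} - X_{jk})/\pi_k$. In case (b), the $L_\infty$ caliper gives
\[
\max_k \frac{|X_{tk} - X_{jk}|}{\pi_k} \leq 1,
\]
so $|X_{tk} - X_{jk}| \leq \pi_k$ for every $k$ and every matched pair. In case (a), the $L_2$ caliper gives $\sum_k (X_{tk}-X_{jk})^2/\pi_k^2 \leq 1$, and dropping all but one summand yields the same conclusion $|X_{tk} - X_{jk}| \leq \pi_k$. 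So (a) and (b) collapse to the same per-pair bound after step (i), which I would state as a short lemma before the two cases.

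For step (ii), fix covariate $k$ and a treated unit $t$ with matched set $\mathcal{C}_t$ and convex weights $\{w_{jt}\}$ (these sum to 1, whether they come from the synthetic control step or from uniform within-cluster weights). Because $\sum_j w_{jt} = 1$,
\[
\Bigl| X_{tk} - \sum_{j \in \mathcal{C}_t} w_{jt} X_{jk} \Bigr|
= \Bigl| \sum_{j \in \mathcal{C}_t} w_{jt} (X_{tk} - X_{jk}) \Bigr|
\leq \sum_{j \in \mathcal{C}_t} w_{jt} \, |X_{tk} - X_{jk}|
\leq \pi_k,
\]
using the triangle inequality and the per-pair bound from step (i). Averaging over $t \in \mathcal{M}_T$ with whatever treated-side weights are used to define $\bar{X}^{\mathbf{w}}_{Tk}$ (these are nonnegative and sum to 1 by construction in the appendix), another triangle inequality gives $|\bar{X}^{\mathbf{w}}_{Tk} - \bar{X}^{\mathbf{w}}_{Ck}| \leq \pi_k$.

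I do not expect a real obstacle here; the proof is essentially bookkeeping. The one item worth being careful about is aligning notation with the Appendix definition of $\bar{X}^{\mathbf{w}}_{Tk}, \bar{X}^{\mathbf{w}}_{Ck}$, in particular that the control-side weight on unit $j$ is of the form $\sum_{t : j \in \mathcal{C}_t} (\text{treated weight}_t) \cdot w_{jt}$, so that interchanging the order of summation gives exactly $\sum_t (\text{treated weight}_t) \sum_{j \in \mathcal{C}_t} w_{jt} X_{jk}$ and the averaging step in (ii) applies cleanly. Once that identification is spelled out, the two bullets follow immediately with no additional work.
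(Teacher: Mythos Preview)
Your proposal is correct and matches the paper's proof essentially line for line: the paper first records the per-pair bound $|X_{tk}-X_{jk}|\le \pi_k$ as a lemma (your step (i)), then uses $\sum_j w_{jt}=1$ and the triangle inequality to pass to the weighted means with treated-side weights $1/n_T$ (your step (ii)). The only notational point to align is that the paper defines $\bar X^{\mathbf w}_{Tk}=\frac{1}{n_T}\sum_t X_{tk}$ and $\bar X^{\mathbf w}_{Ck}=\frac{1}{n_T}\sum_t\sum_{j\in\mathcal C_t} w_{jt}X_{jk}$, so your interchange-of-summation concern is handled directly by definition.
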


\begin{proof}
    See Appendix \ref{app:mib}.
\end{proof}

 Proposition \ref{prop:CSM-is-MIB} not only shows that changing $\pi_k$ for one variable does not affect the imbalance on the other variables, but also shows that covariate balance is controlled by $V$.
 Again, note that the choice to bound $d^{(2)}_V(\Xt, \Xj)$ and $d^{(\infty)}(\Xt, \Xj)$ by 1 is arbitrary, e.g., bounding $d^{(\infty)}_V(\Xt, \Xj) \leq 1$ is equivalent to bounding $d^{(\infty)}_V(\Xt, \Xj) \leq c$ for $c>0$ and $V = diag\{\frac{c}{\pi}\}$.
 
 With adaptive calipers, CSM is MIB for the feasible treated units within $\mathcal{F}$ but $d_V(\Xt, \Xj) \leq 1$ does not hold for other treated units, and thus the bound does not apply.
 That said, the adaptive calipers still provide transparency about the extent to which CSM leaves the MIB class in terms of number of units and the amount they deviate as recorded by their caliper sizes.
 In the subsequent subsections, we focus on cases when all treated units have controls within a set caliper and thus are feasible, i.e., $\mathcal{M}_T = \mathcal{F}$, to focus on several properties of the MIB matching methods.

\subsection{Bounded joint covariate imbalance}

Having shown the bounded marginal balance, we now show how distance-metric caliper matching methods control joint covariate imbalance. Write the empirical joint covariate distributions of the treated and control units as:
\begin{align*}
    f_T(\mathbf{x}) 
    &\equiv \frac{1}{n_T} \sum_{t \in \mathcal{T}} \delta_{\Xt}(\mathbf{x}) \\
    f_C(\mathbf{x}) 
    &\equiv \frac{1}{n_T} \sum_{t \in \mathcal{T}} \sum_{j \in \Ct} w_{jt} \delta_{\Xj} (\mathbf{x}),
\end{align*}
where $\delta_{\bX}(\cdot)$ represents a Dirac delta function at $\bX$, i.e., $\delta_{\bX}(\mathbf{y}) = 1$ if $\mathbf{y} = \mathbf{X}$ and $0$ otherwise.
These empirical distributions are simply weighted sums of point masses located at the units' values.

To demonstrate how distance-metric calipers control the difference between $f_T$ and $f_C$, we use Wasserstein distance.
Formally, the $q$-Wasserstein distance between probability distributions $P$ and $Q$ with distance metric $d(\cdot, \cdot)$ is:
\begin{align*}
    \mathcal{W}_q(P, Q) = \inf_{\pi \in \Pi(P,Q)}  E_\pi\big[ d(\bX, \mathbf{Y})^q \big]^{1/q},
\end{align*}
where the infimum is over all couplings (joint distributions) $\pi$ that have marginal distributions of $P$ and $Q$ ($Pi(P,Q)$).
More intuitively, Wasserstein distance is also known as ``earth-mover's distance'':
viewing $P$ and $Q$ as piles of soil each with total mass 1, $W_q(P, Q)$ measures the minimum ``cost'' to move soil to make $P$ distribute as $Q$ (or vice-versa), where the ``cost'' of moving $k$ units of soil from $\mathbf{x}$ to $\mathbf{y}$ is $k \cdot d(\mathbf{x}, \mathbf{y})$.
Wasserstein distance therefore uses a distance metric between points in $\Rp$ to measure distance between full probability distributions on $\Rp$.

With this notation in hand, Proposition \ref{prop:wass} shows that radius matching bounds the Wasserstein distance between $f_T$ and $f_C$.

\begin{proposition}
\label{prop:wass}
    For $c > 0$, and a matching method, we have, for both $\ell = 2$ and $\infty$: 
    \begin{align*}
        \text{For all } t \in \mathcal{T},
        j \in \Ct ,  d^{(l)}_V(\Xt, \Xj) \leq c
        \implies
        \mathcal{W}^{(l)}_q(f_T, f_C) \leq c
    \end{align*}
\end{proposition}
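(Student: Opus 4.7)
The plan is to prove the Wasserstein bound by exhibiting an explicit coupling rather than trying to directly analyze the infimum. Since $\mathcal{W}_q$ is defined as an infimum over couplings, any feasible coupling yields an upper bound, and the matching structure itself naturally supplies a candidate.

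The natural coupling $\pi$ on $\mathbb{R}^p \times \mathbb{R}^p$ is the one that ``routes'' each treated unit's mass $1/n_T$ to its matched controls in proportion to the matching weights, namely
\begin{equation*}
    \pi \;=\; \frac{1}{n_T} \sum_{t \in \mathcal{T}} \sum_{j \in \mathcal{C}_t} w_{jt}\, \delta_{(\mathbf{X}_t, \mathbf{X}_j)}.
\end{equation*}
First I would verify that $\pi \in \Pi(f_T, f_C)$. The first marginal is $\frac{1}{n_T} \sum_t \left( \sum_{j \in \mathcal{C}_t} w_{jt} \right) \delta_{\mathbf{X}_t} = \frac{1}{n_T} \sum_t \delta_{\mathbf{X}_t} = f_T$, using the convexity constraint $\sum_{j \in \mathcal{C}_t} w_{jt} = 1$. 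The second marginal is $\frac{1}{n_T} \sum_t \sum_{j \in \mathcal{C}_t} w_{jt} \delta_{\mathbf{X}_j} = f_C$ by definition. In particular $\pi$ integrates to $1$, confirming $f_C$ is a probability measure (this also handles matching with replacement transparently, since a control $j$ appearing in several sets $\mathcal{C}_t$ simply accumulates mass).

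Next I would bound the $q$-cost under $\pi$. Since every pair $(\mathbf{X}_t, \mathbf{X}_j)$ charged by $\pi$ satisfies $d_V^{(\ell)}(\mathbf{X}_t, \mathbf{X}_j) \leq c$ by hypothesis,
\begin{equation*}
    E_\pi\bigl[ d_V^{(\ell)}(\mathbf{X}, \mathbf{Y})^q \bigr]
    = \frac{1}{n_T} \sum_{t \in \mathcal{T}} \sum_{j \in \mathcal{C}_t} w_{jt}\, d_V^{(\ell)}(\mathbf{X}_t, \mathbf{X}_j)^q
    \leq \frac{1}{n_T} \sum_{t \in \mathcal{T}} \sum_{j \in \mathcal{C}_t} w_{jt}\, c^q
    = c^q.
\end{equation*}
Taking $q$-th roots and then the infimum over all couplings gives $\mathcal{W}_q^{(\ell)}(f_T, f_C) \leq c$.

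There is no real obstacle here; the argument is essentially a bookkeeping check. The one subtle point worth stating clearly in the write-up is that the same coupling works for both $\ell = 2$ and $\ell = \infty$ because the only property of $d_V^{(\ell)}$ used is the pointwise caliper bound on matched pairs, so the two cases need not be treated separately.
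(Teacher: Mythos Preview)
Your proposal is correct and follows essentially the same approach as the paper: both construct the natural coupling that routes each treated unit's mass $1/n_T$ to its matched controls with weights $w_{jt}$, verify the marginals, and bound the $q$-th moment by $c^q$ using the caliper hypothesis. The paper phrases the coupling as a two-stage sampling procedure while you write it directly as a discrete measure, but these are the same object; the only additional remark in the paper is a one-line observation that the case $q=\infty$ follows by taking limits.
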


\begin{proof}
    See Appendix \ref{app:wass}.
\end{proof}

We make a few remarks about Proposition \ref{prop:wass}.
First, while the notation is technical, the intuition is straightforward.
Distance-metric calipers control how far each treated unit's covariates can be from its matched controls' covariates.
Since $f_T$ and $f_C$ are weighted sums of the point masses associated with these covariates, the calipers must also control the distance between $f_T$ and $f_C$.
For an exact caliper $c=0$, Proposition \ref{prop:wass} simply states that exact matching guarantees that $f_T$ coincides with $f_C$.
In practice, however, reducing the caliper size to $c=0$ drops all of the treated units, leaving the proposition to be vacuously true; $c$ must be chosen with the estimate-estimand tradeoff in mind.
To the best of our knowledge, this is the first theoretical result for a matching method's bound on joint balance.
\citet{iacus2011multivariate} did show that CEM improved joint balance, but only through simulation. 

The Wasserstein distance depends on the chosen distance metric.
In the proposition, $\mathcal{W}^{(l)}_q(f_T, f_C)$ is dependent on $d_V$.
Thus, if our $V$ changes, the Wasserstein distance also changes.
However, the core idea of the proof remains intact: the bounds on the covariates due to matching ensure a joint balance between the covariates of the matched controls and treatments, thereby providing local control and helping to avoid those adverse situations discussed in Section~\ref{sec:toy}.

Control over joint covariate imbalance naturally implies control over marginal imbalances.
For example, Proposition \ref{prop:meanbd} shows how distance-metric calipers also bound the distance between the ($p$-dimensional) empirical weighted covariate means of the treated and matched control units, respectively denoted $\bar{\bX}_T$ and $\Bar{\bX}_C$.

\begin{proposition}
\label{prop:meanbd}
    For $c > 0$ and a matching method, we have, for both $\ell = $ 2 and $\infty$: 
    \begin{align*}
         \text{For all } t \in \mathcal{M}_T, j \in \Ct ,  d^{(l)}_V(\Xt, \Xj) \leq c 
        \implies d_V^{\ell}(\bar{\bX}_T, \bar{\bX}_C) \leq c
    \end{align*}
\end{proposition}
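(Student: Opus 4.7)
The plan is to reduce Proposition \ref{prop:meanbd} to the triangle inequality. Both $d_V^{(2)}(\mathbf{x},\mathbf{y}) = \|V(\mathbf{x}-\mathbf{y})\|_2$ and $d_V^{(\infty)}(\mathbf{x},\mathbf{y}) = \|V(\mathbf{x}-\mathbf{y})\|_\infty$ are genuine norms composed with the linear map $V$, and so each satisfies the triangle inequality. This is the only analytic fact the proof requires; the rest is bookkeeping with convex combinations.

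The key algebraic step is to express the difference of the weighted means as a convex combination of the unit-level differences that the caliper already controls. Since $\sum_{j \in \Ct} w_{jt} = 1$ for each $t \in \mathcal{M}_T$, we can insert this partition of unity into $\Xt$ and rewrite
\[
\bar{\bX}_T - \bar{\bX}_C \;=\; \frac{1}{|\mathcal{M}_T|} \sum_{t \in \mathcal{M}_T} \sum_{j \in \Ct} w_{jt}\,(\Xt - \Xj),
\]
where the coefficients $w_{jt}/|\mathcal{M}_T|$ are nonnegative and sum to $1$. Applying the triangle inequality for $\|V(\cdot)\|_\ell$ and then the caliper hypothesis $d_V^{(\ell)}(\Xt,\Xj)\leq c$ gives
\[
d_V^{(\ell)}(\bar{\bX}_T,\bar{\bX}_C) \;\leq\; \frac{1}{|\mathcal{M}_T|}\sum_{t\in\mathcal{M}_T}\sum_{j\in\Ct} w_{jt}\, d_V^{(\ell)}(\Xt,\Xj) \;\leq\; c,
\]
which concludes the proof.

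I expect no real obstacle. The only hypothesis worth double-checking is that the within-set weights $w_{jt}$ are nonnegative and sum to one, which holds both for the uniform matching weights and for the convex SCM weights defined earlier in the paper, so the result applies uniformly to the weighting schemes considered. As an alternative route, one could derive the bound as a corollary of Proposition \ref{prop:wass} by noting that the mean functional is $1$-Lipschitz with respect to the relevant Wasserstein metric, but the triangle-inequality argument above is shorter and more self-contained.
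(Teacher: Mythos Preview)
Your proof is correct and follows essentially the same approach as the paper: rewrite $\bar{\bX}_T-\bar{\bX}_C$ as the convex combination $\frac{1}{n_T}\sum_t\sum_{j\in\Ct} w_{jt}(\Xt-\Xj)$, apply the triangle inequality for the norm inducing $d_V^{(\ell)}$, and use the caliper bound. The paper's presentation is slightly more explicit about invoking translation invariance and the induced norm, but the argument is the same.
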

\begin{proof}
    See Appendix \ref{app:meanbd}.
\end{proof}
The above proposition is in effect a joint version of Proposition~\ref{prop:dmcb}, which gives bounds on the covariates in turn.

Lastly, we note that other distance metrics between $P$ and $Q$ could be chosen. For instance, \cite{iacus2011multivariate} uses the $L_1$ distance, demonstrating that Coarsened Exact Matching reduces the joint imbalance between the control and treatment groups using an empirical example.
We use $L_1$ distance for its computational simplicity and use the Wasserstein distance because its properties facilitate easier proof construction, as it inherently utilizes a distance metric.

In summary, distance-metric calipers enable precise control of joint covariate imbalance.
While these bounds are not necessarily small in practice, they guarantee that observed imbalance cannot be too great, even in the worst case.
This leads to a variety of desirable properties which we illustrate in the following sections.

\subsection{Bounded bias}
\label{sec:biasbd}

Because MIB matching methods bound the distance between the covariates of matched units, they naturally bound the distance between smooth functions $f:\Rp \to \R$ of those covariates as well.
Write the control potential outcome for unit $i$ as $Y_i(0) = f_0(\bX_i) + \epsilon_i$.
Then assuming that $f_0(\cdot)$ is smooth (i.e., Lipschitz) immediately bounds the bias of any FSATT estimate produced by a method using a distance-metric calipers. (See Appendix \ref{app:lipschitz} for technical details about Lipschitz functions in $\Rp$.)

\begin{proposition}
\label{prop:biasbd_lip}
Suppose $f_0: \Rp \to \R$ is Lipschitz$(\lambda)$ with respect to $d_V(\cdot, \cdot)$ = $d^{(2)}_V(\cdot, \cdot)$ or $d^{(\infty)}_V(\cdot, \cdot)$.
Then for a matching procedure such that for all $t$, $d_V(\Xt, \Xj) \leq c$ for all $j \in \Ct$:
\begin{equation*}
    \big|E[\tau - \hat{\tau}] \big| \leq \lambda c.
\end{equation*}
\end{proposition}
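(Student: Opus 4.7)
The plan is to decompose the bias term so that the Lipschitz assumption and the caliper bound can be applied in sequence. First I would start from the definitions
\begin{align*}
    \tau - \hat{\tau} = \frac{1}{n_T} \sum_{t \in \mathcal{T}} \Bigl( \sum_{j \in \mathcal{C}_t} w_{jt} Y_j - Y_t(0) \Bigr),
\end{align*}
which follows because $Y_t = Y_t(1)$ cancels and the $Y_j$ for controls are equal to $Y_j(0)$. Substituting $Y_j(0) = f_0(\bX_j) + \epsilon_j$ and $Y_t(0) = f_0(\bX_t) + \epsilon_t$, taking expectations, and invoking conditional ignorability so that $E[\epsilon_i \mid \bX_i] = 0$ (noting the weights $w_{jt}$ are measurable functions of the covariates only, since the synthetic control step uses only $\bX$), the noise terms drop and leave
\begin{align*}
    E[\tau - \hat{\tau}] = \frac{1}{n_T} \sum_{t \in \mathcal{T}} \Bigl( \sum_{j \in \mathcal{C}_t} w_{jt} f_0(\bX_j) - f_0(\bX_t) \Bigr).
\end{align*}

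Next I would exploit the convexity constraint $\sum_{j \in \mathcal{C}_t} w_{jt} = 1$ to rewrite $f_0(\bX_t) = \sum_{j \in \mathcal{C}_t} w_{jt} f_0(\bX_t)$, so the inner term becomes $\sum_{j \in \mathcal{C}_t} w_{jt} \bigl( f_0(\bX_j) - f_0(\bX_t) \bigr)$. Applying the triangle inequality and the Lipschitz property of $f_0$ with respect to $d_V$ yields
\begin{align*}
    \bigl| E[\tau - \hat{\tau}] \bigr| \leq \frac{1}{n_T} \sum_{t \in \mathcal{T}} \sum_{j \in \mathcal{C}_t} w_{jt}\, \lambda\, d_V(\bX_t, \bX_j).
\end{align*}
Finally, the caliper hypothesis $d_V(\bX_t, \bX_j) \leq c$ for every $t$ and every $j \in \mathcal{C}_t$ replaces each distance by $c$, and collapsing the sums using $\sum_j w_{jt} = 1$ and $|\mathcal{T}| = n_T$ gives the desired $\lambda c$.

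There is no substantive obstacle here; the proof is essentially a one-line chain of inequalities once the bias is written in the right form. The only care points are (i) making explicit that $w_{jt}$ depends only on covariates, so ignorability is enough to zero out the noise under expectation, and (ii) using the convex combination identity to turn an additive gap into a weighted sum of pairwise differences, which is what lets the Lipschitz bound enter cleanly. The same argument works verbatim for both $d_V^{(2)}$ and $d_V^{(\infty)}$, since only the Lipschitz hypothesis with respect to $d_V$ and the caliper bound are used.
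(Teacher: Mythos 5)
Your proof is correct and follows essentially the same route as the paper: the paper's proof starts directly from the identity $E[\tau-\hat{\tau}] = \frac{1}{n_T}\sum_{t}\sum_{j\in\mathcal{C}_t} w_{jt}\bigl(f_0(\mathbf{X}_t)-f_0(\mathbf{X}_j)\bigr)$ and then applies the Lipschitz property, the caliper bound, and $\sum_j w_{jt}=1$, exactly as you do. The additional steps you spell out (cancellation of the noise terms under ignorability and the convex-combination rewriting of $f_0(\mathbf{X}_t)$) are just the bias/error decomposition the paper records elsewhere (Equation \ref{eq:bias-variance}) and leaves implicit in its proof, so they are a welcome clarification rather than a departure.
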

\begin{proof}
    \begin{align*}
        \big| E &[\tau - \hat{\tau} ] \big| \\
        &= \bigg| \frac{1}{n_T} \sum_{t \in \mathcal{T}} 
            \Big( \sum_{j \in \Ct} w_{jt} \big(f_0(\Xt) - f_0(\Xj)\big) \Big) \bigg| \\
        &\leq \bigg| \frac{1}{n_T} \sum_{t \in \mathcal{T}} 
            \Big( \sum_{j \in \Ct} w_{jt} \lambda d(\Xj, \Xt) \Big) \bigg| \\
        &\leq \lambda c.
    \end{align*}
\end{proof}

Proposition \ref{prop:biasbd_lip} states that for distance-metric caliper matching methods, worst-case bias is proportional to the caliper size $c$.\footnote{Proposition \ref{prop:biasbd_lip} applies to a slightly different set of methods than Proposition 1 from \citet{iacus2011multivariate}, which proves a similar bias bound for MIB matching methods (see Appendix \ref{app:mib}).}
As in Proposition \ref{prop:wass}, setting $c=0$ shows that exact matching leads to unbiased estimates, but in practice we must use $c > 0$ to navigate the estimate-estimand tradeoff without dropping all of the treated units.

Of course, we rarely know the Lipschitz constant $\lambda$ in practice, so Proposition \ref{prop:biasbd_lip} does not generate empirical bias bounds.
Nonetheless, Proposition \ref{prop:biasbd_lip} illustrates how distance-metric calipers control bias.
If each treated unit is close to all of its matched controls in covariate space, its expected counterfactual outcome must be close to their expected outcomes, for reasonable (i.e., smooth) outcome functions.
As a result, any weighted average of the control units' outcomes cannot differ too much in expectation from the treated unit's true counterfactual outcome, regardless of the specific form of $f_0(\cdot)$. 

In practice, we can ideally get estimates that beat this bound by assuming local linearity and adjusting further for residual imbalance. We discuss this in the next section.

\subsection{Bias reduction from synthetic controls}
\label{sec:biasbdscm}

While using distance-metric calipers bounds bias, using synthetic controls within these calipers actively reduces bias.
Specifically, synthetic controls naturally remove linear bias by conducting local linear interpolation.
This means that if $f_0(\cdot)$ is linear, synthetic controls would completely eliminate bias if the weights achieved perfect balance on $X$.
For nonlinear $f_0(\cdot)$, bias will be reduced to higher-order nonlinear trends, which are, in general, less influential within tight calipers for smooth functions.
Proposition~\ref{prop:scbiasbd} more precisely shows how exact synthetic controls eliminate linear bias.

\begin{proposition}
\label{prop:scbiasbd}
Let $d_V(\cdot, \cdot)$ = $d^{(2)}_V(\cdot, \cdot)$ or $d^{(\infty)}_V(\cdot, \cdot)$.
Suppose $f_0: \Rp \to \R$ is differentiable and Lipschitz($\lambda$) with respect to $d_V$.
Then for a matching procedure such that for all $t$, $d_V(\Xt, \Xj) \leq c$ for all $j \in \Ct$,
if $\sum_{j \in \Ct} w_{jt} \Xj = \Xt$ for all $t \in \mathcal{T}$ (i.e., we have perfect balance locally for each treated unit):
\begin{equation*}
    \big|E[\tau - \hat{\tau}] \big| \leq o(c).
\end{equation*}
\end{proposition}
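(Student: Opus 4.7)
The plan is to mimic the proof of Proposition \ref{prop:biasbd_lip} but replace the crude Lipschitz bound on $f_0(\Xt) - f_0(\Xj)$ with a first-order Taylor expansion, then use the exact local balance assumption to annihilate the linear term. The Lipschitz hypothesis serves mainly to control the gradient, while differentiability provides the Taylor remainder.

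First, I would start from the same bias decomposition as in Proposition \ref{prop:biasbd_lip}:
\begin{equation*}
    E[\tau - \hat{\tau}] = \frac{1}{n_T} \sum_{t \in \mathcal{T}} \sum_{j \in \Ct} w_{jt}\, \big(f_0(\Xt) - f_0(\Xj)\big).
\end{equation*}
For each matched pair $(t,j)$, I would Taylor expand around $\Xt$:
\begin{equation*}
    f_0(\Xj) - f_0(\Xt) = \nabla f_0(\Xt)^T (\Xj - \Xt) + r_{tj},
\end{equation*}
where differentiability gives $r_{tj} = o(d_V(\Xt, \Xj))$ as $d_V(\Xt, \Xj) \to 0$ (using equivalence of $d_V$ with the Euclidean norm on $\Rp$).

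Second, I would substitute and observe that the linear contribution vanishes for each $t$: pulling the gradient out of the inner sum and invoking the perfect-balance hypothesis $\sum_{j \in \Ct} w_{jt}\Xj = \Xt$ gives
\begin{equation*}
    \nabla f_0(\Xt)^T \Big( \sum_{j \in \Ct} w_{jt}\Xj - \Xt \Big) = 0.
\end{equation*}
Only the remainder survives, so $|E[\tau - \hat{\tau}]| \leq \max_{t, j \in \Ct} |r_{tj}|$. Since $d_V(\Xt, \Xj) \leq c$ for every matched pair and $r_{tj}$ is $o(d_V(\Xt, \Xj))$, this maximum is $o(c)$ as $c \to 0$, which gives the claim.

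The main obstacle is obtaining \emph{uniform} $o(c)$ control on the remainder across the (finitely many, but growing in number) matched pairs. Pointwise differentiability only delivers $r_{tj}/d_V(\Xt,\Xj) \to 0$ for each fixed pair. I would resolve this either by (i) strengthening the hypothesis to $f_0 \in C^1$ on the compact support of $\bX$, so that $\nabla f_0$ has a common modulus of continuity and Taylor's theorem with integral remainder yields a uniform $o(c)$ bound, or (ii) reading $o(c)$ as a statement about the asymptotic regime $c \to 0$ under which all matched pairs are driven uniformly close to $\Xt$ simultaneously, so pointwise differentiability of $f_0$ suffices. Option (i) produces the cleanest statement and dovetails naturally with the Lipschitz assumption already in force.
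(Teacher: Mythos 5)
Your proposal is correct and takes essentially the same route as the paper: the paper also starts from this bias decomposition, Taylor expands $f_0$ around $\Xt$ (phrased as a ``Taylor expansion in the distance metric'' via directional derivatives, but the linear term is immediately rewritten as $\nabla f_0(\Xt)^T(\Xj - \Xt)$), annihilates it with the perfect-balance hypothesis, and absorbs the $o(d_V(\Xt,\Xj))$ remainders into $o(c)$ using the equivalence of $d_V$ with the Euclidean norm. The uniformity issue you flag for the remainder is passed over silently in the paper's proof as well, so your explicit discussion of it is, if anything, slightly more careful.
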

\begin{proof}
See Appendix~\ref{app:scbiasbd} for a proof based on a Taylor expansion with respect to the given distance metric.
The resulting expansion differs from the usual multivariate Taylor expansion, since it uses directional derivatives to better utilize the distance-metric calipers.
\end{proof}

In Proposition~\ref{prop:scbiasbd}, the $o(c)$ term represents higher-order terms which go to zero more quickly than does the caliper $c$ as $c$ shrinks toward zero.
Compare to the bias bound of $\lambda c$ \emph{without} the synthetic control step in Propostion~\ref{prop:biasbd_lip}: we have lost the $\lambda c$ term, leaving something smaller (asymptotically).
In practice, shrinking $c$ to zero drops all treated units, but the notation highlights how implementing synthetic controls within calipers takes advantage of local linearity.
In particular, while linear interpolation across large distances can lead to significant interpolation bias \citep{kellogg2021combining},
restricting the donor-pool units to be within a caliper distance from the treated unit reduces the impact of nonlinearities.

\subsection{Comparing CSM to CEM}
\label{sec:compCEM}

As discussed in Section \ref{sec:close}, radius matching methods have many similarities with coarsened exact matching (CEM) \citep{iacus2012causal}, which we celebrate by matching two of the three letters.
In particular, radius matching and CEM both possesses the imbalance-bounding and bias-bounding guarantees discussed in the previous sections.

To clarify the benefits of radius matching, we consider an example with two covariates $X_1$ and $X_2$, as in Figure \ref{fig:vs_cem}. The coarsening of each variable is made equal sized. We visualizes the CEM grid defined by the coarsening of the two covariates, 
along with an equivalently sized $L_\infty$ caliper around treated unit $t_1$.
\begin{figure}[t]
    \centering
    \includegraphics[width=\textwidth]{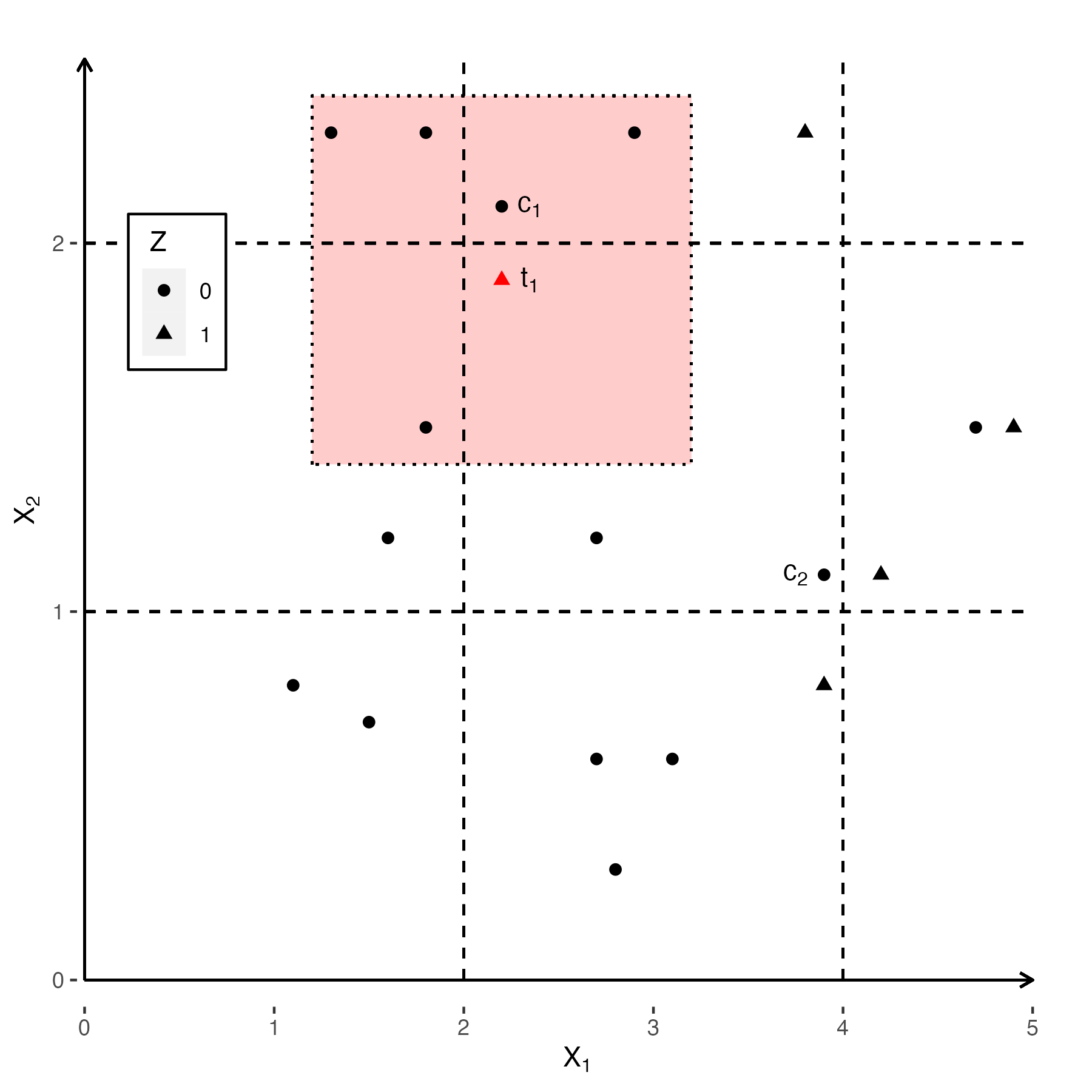}
    \caption{Comparison of coarsened exact matching with radius matching using $L_\infty$ calipers.
    Dashed gridlines represent CEM calipers.
    Shaded box represents scaled $L_\infty$ caliper around point $t_1$.}
    \label{fig:vs_cem}
\end{figure}
Because $t_1$ lies near a boundary defined by the covariate coarsening, CEM does not match $t_1$ to $c_1$, while the $L_\infty$ caliper does.
On the other hand, CEM matches $t_1$ to $c_2$ because they lie in the same caliper grid, even though the two units lie more than one $L_\infty$ unit apart from each other.
Figure \ref{fig:vs_cem} shows how, given a fixed caliper size, CEM guarantees that treated units lie within two $L_\infty$ units of their matched controls whereas the $L_\infty$ caliper guarantees the distance is no more than one unit.
By centering calipers on each treated unit, radius matching matches each treated unit to all nearby control units while guaranteeing imbalance and bias bounds that are twice as tight as those guaranteed by CEM. In other words, when each method is ``equally sized'' in terms of how large a treated unit's cachement area for possible control is, radius matching has tighter control of bias. 

As a result, we can present the following proposition, which highlights the improved guarantee of bias bounds. Specifically, it improves the upper bound of the bias for CSM compared to CEM.

\begin{proposition}
Suppose $f_0: \Rp \to \R$ is Lipschitz$(\lambda)$ with respect to $d_V(\cdot, \cdot)$ = $d^{(2)}_V(\cdot, \cdot)$ or $d^{(\infty)}_V(\cdot, \cdot)$, and $\boldsymbol{\pi} \in \Rp_{+}$. 
Then CSM with covariate-wise caliper $\pi / 2$ and CEM with covariate-wise coarsening $\pi$ have the same cachement size for each treated unit, but

\begin{equation*}
    \big|E[\tau - \hat{\tau}] \big| \leq \begin{cases}
             \lambda c  \text{ for CEM}  \\ 
             \lambda c/2  \text{ for CSM} 
        \end{cases} 
\end{equation*}
\end{proposition}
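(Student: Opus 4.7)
The plan is to apply Proposition \ref{prop:biasbd_lip} to each method after verifying that a shared catchment size maps to maximum within-catchment covariate distances that differ by a factor of two. The geometric heart of the claim is that a CEM bin has fixed boundaries (so the treated unit may sit anywhere inside), whereas a CSM caliper is anchored at the treated unit itself; halving the CSM caliper relative to the CEM bin width therefore preserves the catchment but halves the worst-case distance between a treated unit and any unit it can match to.

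First I would formalize the catchment equivalence. Fix the ambient metric as $d_V^{(\infty)}$ with $V = \mathrm{diag}(1/\pi_k)$. The CEM catchment for a treated unit $t$ is its coarsening bin, an axis-aligned box of side length $\pi_k$ in coordinate $k$. The CSM catchment with caliper $\pi/2$ is the $d_V^{(\infty)}$-ball of radius $1/2$ around $\Xt$ (equivalently, the unit ball under $V' = \mathrm{diag}(2/\pi_k)$), which is also an axis-aligned box of side length $\pi_k$ in coordinate $k$. The shapes and total widths in each coordinate thus agree, differing only in anchoring (boundary-anchored vs.\ treated-unit-anchored).

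Next I would bound worst-case distances between a treated unit $t$ and any matched control $j$ in the common metric $d_V^{(\infty)}$. Under CEM, $\Xt$ and $\Xj$ lie in the same bin, so $|X_{tk} - X_{jk}| \leq \pi_k$ for every $k$ and hence $d_V^{(\infty)}(\Xt, \Xj) \leq 1$; this can be tight, attained when the two points sit at opposite corners of the bin. Under CSM with caliper $\pi/2$, the centered caliper forces $|X_{tk} - X_{jk}| \leq \pi_k/2$ and thus $d_V^{(\infty)}(\Xt, \Xj) \leq 1/2$.

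Finally I would invoke Proposition \ref{prop:biasbd_lip} under the common Lipschitz constant $\lambda$ for $f_0$ with respect to $d_V^{(\infty)}$: the CEM bias is bounded by $\lambda \cdot 1$ and the CSM bias by $\lambda \cdot (1/2)$. Writing $c$ for the common normalized catchment scale (i.e.\ $c=1$ in these units) yields the advertised $\lambda c$ and $\lambda c/2$ bounds. No substantive obstacle is expected; the crux is the one-line geometric observation that a bin-anchored catchment of side $\pi_k$ permits a within-bin distance up to $\pi_k$, while a treated-unit-centered caliper of half-width $\pi_k/2$ permits only $\pi_k/2$. The care point is simply to hold $V$ and therefore $\lambda$ fixed across the two comparisons, so the factor of two is attributable purely to the anchoring rather than to any rescaling of the metric.
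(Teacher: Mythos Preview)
Your proposal is correct and follows the same approach the paper uses: the paper does not give a separate formal proof of this proposition but relies on the geometric discussion surrounding Figure~\ref{fig:vs_cem} (a CEM bin of width $\pi_k$ allows matched pairs up to $\pi_k$ apart coordinatewise, whereas a treated-unit-centered $L_\infty$ caliper of half-width $\pi_k/2$ has the same catchment box but caps the coordinatewise distance at $\pi_k/2$), and then appeals to Proposition~\ref{prop:biasbd_lip}. Your write-up simply makes explicit the steps the paper leaves implicit, including the important point that $V$ (and hence $\lambda$) is held fixed across the comparison.
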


Naturally, there are tradeoffs for these improved bias bounds. Computationally, radius matching requires computing distances between each treated unit and the control units, an operation of order $n_t n_c$, unlike CEM which only requires a frequency tabulation of order $n_t+n_c$.
Non-uniform calipers are also slightly easier to implement via covariate coarsenings, though for many common covariates one can directly transform the covariate and use a uniform caliper to replicate a non-uniform caliper.\footnote{E.g., rather than non-uniformly coarsening income as \{ \$0-20k, \$20k-50k, \$50k-100k, \$100k+ \}, it may be more reasonable to log-transform the income covariate and use a uniform caliper to avoid, e.g., concluding that an individual earning \$20k is as far from an individual earning \$20.1k as they are from an individual earning \$50k.}
Overall, however, radius matching preserves the transparency and interpretability of CEM while significantly improving on its useful bias and imbalance bounding properties.

We also note that CSM offers the flexibility to select units that are challenging to match effectively, a feature that CEM lacks.
In CEM, after specifying the desired coarsening size, an analyst can only determine if a treated unit has found a match by checking if there is a control within the same stratum.
Conversely, with the adaptive caliper, we can evaluate the treated units based on the size of their adaptive calipers, which indicate the extent of their matching difficulty.
By contrast, CEM allows for a clean inferential strategy by viewing the bins as strata in a post-stratified experiment; with CSM inference is less clear cut, as we will discuss below.

\subsection{Bias-variance tradeoff}
\label{sec:bv}

Reducing caliper sizes reduces the potential bias of the resulting SATT estimate, as shown by Proposition \ref{prop:biasbd_lip}, but may increase the estimate's variance due to reducing the number of controls and, without adaptive calipers, the number of treated units.
To formalize this idea, we introduce the (conditional) mean-squared error for the SATT $\tau$:
\begin{align*}
    CMSE = E[(\hat{\tau} - \tau)^2 | \bX, \mathbf{Z}],
\end{align*}
where the expectation is conditioned on the observed covariates and treatment assignments.
Then, using a working model of the potential outcomes of $Y_i(z) = f( X_i ) + z \tau_i + e_i$, where $e_i$ is a 0-centered noise term, we can use standard algebraic manipulation to show that:
\begin{align}
    \hat{\tau} - \tau &= Bias + Error \label{eq:hat-tau-breakdown}
\end{align}
where 
\begin{equation}
\begin{split}
    Bias &= 
        \frac{1}{n_T} \sum_{t \in \mathcal{T}} \sum_{j \in \Ct} 
            w_{jt} \big( f_0(\Xt) - f_0(\Xj) \big)    \\
    Error &=
        \frac{1}{n_T} \sum_{t \in \mathcal{T}} (\epsilon_t - \sum_{j \in \mathcal{C}} w_{jt} \epsilon_j) . \label{eq:bias-variance}
\end{split}
\end{equation}

We then have
\begin{align*}
    CMSE &= Bias^2 + Var ,
\end{align*}
where, using our working model,
\begin{align}
\label{eq:true-variance}
    Var &=
        \frac{1}{n_T^2} \sum_{t \in \mathcal{T}} \sigma_t^2 +
        \frac{1}{n_T^2} \sum_{j \in \mathcal{C}} (\sum_{t \in \mathcal{T}} w_{jt})^2 \sigma_j^2 ,
\end{align}
with $\sigma_i$ representing the population sampling variance of unit $i$ conditional on its covariates $\bX_i$ \citep{kallus2020generalized}. 

If we assume homoskedasticity of the residuals within treatment arms (i.e., $\sigma_i = \sigma_C$ in the control group) and the conditions of Proposition \ref{prop:biasbd_lip} with caliper $c$, we can bound CMSE as:
\begin{align}
    CMSE &\leq 
        (\lambda c)^2 +
        \Big( \frac{\sigma^2_T}{n_T} + \frac{\sum_{j \in \mathcal{C}} (\sum_{t \in \mathcal{T}} w_{jt})^2\sigma^2_C}{n_T^2} \Big) \nonumber \\
        & = (\lambda c)^2 +   \frac{\sigma_T^2}{ESS(\mathcal{T})} +  \frac{\sigma^2_C}{ESS(\mathcal{C})} , \label{eq:cmsebd}
\end{align}
where the effective sample size (ESS) of a set $\mathcal{S}$ of units with weights $w_i$ is
\begin{align}
    ESS(\mathcal{S}) = \frac{(\sum_{i \in \mathcal{S}} w_i)^2}{\sum_{i \in \mathcal{S}} w_i^2}, \label{eq:ESS}
\end{align}
with $w_i = \sum_{t \in \mathcal{T}} w_{it}$ for $i \in \mathcal{C}$, and  $w_i = 1/n_T$ for $i \in \mathcal{T}$.

Equation \ref{eq:cmsebd} clarifies the relationship between caliper size and the bias-variance tradeoff.
For a fixed estimand, where we do not drop or add treated units as caliper size changes,increasing $c$ naturally exposes the resulting estimate to more bias.
Increasing $c$ also generally reduces variance by dispersing weight across more control units, increasing the effective sample size of the control group.

That said, in contexts where $|\mathcal{C}| \gg |\mathcal{T}|$, the overall variance of the average impact estimate could be dominated by the variance associated with the treated units (of order $1/ESS(\mathcal{T})$, which remains unchanged even if more control units are added (which shrinks $1/ESS(\mathcal{C}$.
To be more explicit, consider a single treated unit $t$ with matched controls $j \in \Ct$ all given uniform weights.
Here the variance associated with the treated unit is
$\sigma^2/1$ while the variance associated the controls is $\sigma^2/|\Ct|$ (assuming homoskedasticity).
Even if $\Ct$ were large, we still have the initial $\sigma^2$ term.
Once $\Ct$ reaches 4 or 5, increasing caliper size would have diminishing returns on variance reduction, suggesting that it may typically be better, in terms of CMSE, to use smaller calipers.

On the other hand, when matching with replacement, a single control $j$ may be assigned significant weight for multiple treated units $t$.
In these cases, $\sum_{t \in \mathcal{T}} w_{jt}$ could be greater than 1 for some control units, driving $ESS(\mathcal{C})$ down.
In other words, if the same controls get reused heavily enough, the variance associated with the control units may not be dominated by the variance associated with the treated units, even if the initial control pool is large.

\subsection{Variance Estimation}
\label{sec:var-est}
How to estimate standard errors for our matching method requires special attention.
Abadie and Imbens, through a series of papers (\citeyear{abadie2006large}, \citeyear{abadie2008failure}, \citeyear{abadie2011bias}), argued that the bootstrap method was inadequate and developed an asymptotically valid approach for the $M$-nearest neighbor estimator as the numbers of treatment and control units increase infinitely, with $M$ remaining fixed.
\cite{otsu2017bootstrap} introduced a weighted bootstrap methodology that relies on overlap of treated and control covariates.

We instead draw from the survey sampling literature \citep[e.g.,][]{potthoff1992equivalent} and plug in effective sample sizes and variance estimates into our Equation~\ref{eq:cmsebd}.
This has been successfully applied in various different types of weighting estimators in causal inference \citep{ben2022synthetic, lu2023you, keele2023hospital}.

Our approach differs from the Abadie and Imbens literature in several aspects: First, we utilize synthetic control weights instead of average $M$-nearest-neighbors weights.
Second, we condition on the treated units as defined by their covariates, assume a stochastic model for the units' outcomes, and make a working assumption of homoskedasticity.
Third, we operate in a scenario where the number of treatment units ($n_T$) is low, while the number of control units ($n_C$) is high.
As \cite{ferman2021matching} noted, the asymptotics in this setting differ from those Abadie and Imbens described, with $n_T$ increasing to infinity.

Our variance estimator is the following plug-in estimator:
\begin{align}
\label{eq:plug-in-variance-estimator}
    \widehat{Var}(\hat{\tau}) &= S^2 \left( \frac{1}{n_T} + \frac{1}{\text{ESS}(\mathcal{C})} \right)
\end{align}
where $S^2$ is an estimate of the homoskedastic residuals (assumed the same for treatment and control groups), and $\text{ESS}(\mathcal{C})$ follows \ref{eq:ESS}. 
In particular, $S^2$ is a pooled variance estimator, pooling across clusters: 
$$ S^2 = \frac{1}{N_C} \sum_t |\mathcal{C}_t| s^2_t \text{ with } N_C = \sum_t |\mathcal{C}_t|. $$
where each cluster's residual variance is
$$ s^2_t = \frac{1}{|\mathcal{C}_t| - 1} \sum_{j \in \mathcal{C}_t} e_{tj}^2 , $$
with
$$ e_{tj} = Y_{tj} - \bar{Y}_t, \text{ where } \bar{Y}_t = \frac{1}{|\mathcal{C}_t|} \sum_{j \in \mathcal{C}_t} Y_{tj} .$$
We drop all clusters $\mathcal{C}_t$ with $|\mathcal{C}_t| = 1$, including for the calculation of $N_C$.
Although we could use the weights from the synthetic control to prioritize ``good'' controls, we opt for uniform weights over all matched units as the key element is control unit similarity, which is not necessarily optimized by the synthetic step.

Our estimates of $s_j^2$ are inflated by differences in $f_0(\bX_i)$ driven by variation in the $\bX_i$ within each cluster.
In other words, if the control units are widely dispersed and the expected outcome $f_0(X_i)$ changes rapidly as $\bX_i$ changes, then the variation in $f_0(X_i)$ will be absorbed by the $s_j^2$ terms, resulting in overly large standard errors.
This phenomenon was observed in our simulations, as discussed in Section~\ref{sec:Inference-Results}.
This inflation is not necessarily bad, however, as we would expect it to grow roughly in proportion to the size of the bias term, which is also driven by these differences (see the $f_0(X_t) - f_0(X_i)$ terms in Equation~\ref{eq:bias-variance}).
One can thus view these SEs as predicting overall error (see, e.g., the discussion of standard errors as predicting overall error in \citet{sundberg2003conditional} and discussion of expanding of standard errors to include bias in \citet{weidmann2021missing}).

Our inference goal is the Root Mean Squared Error (RMSE) of our estimate, conditioned on the treatment group's covariates, represented by $\sqrt{E[(\hat \tau - \tau)^2 | \mathbf{X}_\mathcal{T}]}$.
Other options are possible, as is well illustrated by the literature.
\cite{kallus2020generalized} focuses on the conditional standard error, essentially the variance term.
Abadie and Imbens, along with \cite{otsu2017bootstrap}, target the unconditional coverage.
\cite{ferman2021matching} aims at the unconditional type-I error rate.
In particular, it is important to note, as \cite{imbens2015causal} mentions, that the conditional and unconditional standard errors differ.

\section{Simulation Studies}
\label{sec:simulation}

To understand how CSM performs, we consider a range of simulation studies.
First, we examine a simple simulation based on the toy examples in Section \ref{sec:toy}, where local matches tend to be important.
We then consider a few canonical simulated datasets taken from the literature to assess general performance.
Finally, we assess our inferential method.

\subsection{Methods}
\label{sec:sim_methods}

We compare CSM to a variety of popular matching, balancing, outcome regression, propensity score, and doubly robust methods, as described in Table \ref{tab:sim_methods}.\footnote{We initially also included CEM using synthetic controls within each cell and caliper matching using simple averages within each caliper, but, for clarity, we now exclude these results since their performances typically lie between the performances of CSM and CEM.}

\begin{table}[t]
\begin{tabular}{|l|l|l|}
\hline
\rowcolor[HTML]{C0C0C0} 
Method class                         & Method name & Description  \\ \hline
Baseline                             & diff & Difference-in-means estimator \\ \hline
                                     & match-1NN   & One nearest neighbor matching \\ \cline{2-3} 
\multirow{-2}{*}{Matching}           & match-CEM   & Coarsened exact matching \\ \hline
                                     & or-lm       & Linear model on all two-way interactions  \\ \cline{2-3} 
\multirow{-2}{*}{Outcome model} & or-BART     & Bayesian additive regression tree (BART) \\ \hline
                                     & ps-lm       & \begin{tabular}[c]{@{}l@{}}Logistic model for propensity score \\ 
                                     on all two-way interactions\end{tabular} \\ \cline{2-3} 
\multirow{-2}{*}{Propensity score}   & ps-BART     & BART with binary outcome (probit link) \\ \hline
                                     & bal-SBW1    & \begin{tabular}[c]{@{}l@{}}Stable balancing weights (SBW) \\ 
                                     on marginal means \end{tabular} \\ \cline{2-3} 
\multirow{-2}{*}{Balancing}          & bal-SBW2    & \begin{tabular}[c]{@{}l@{}}SBW on marginal and two-way \\ 
                                     interaction means \end{tabular} \\ \hline
                                     & dr-AIPW1    & \begin{tabular}[c]{@{}l@{}}Augmented inverse propensity weighting \\ 
                                     (AIPW) using SuperLearner on \\ 
                                     linear models\end{tabular} \\ \cline{2-3} 
                                     & dr-AIPW2    & \begin{tabular}[c]{@{}l@{}}AIPW using SuperLearner on \\ 
                                     machine-learning models \end{tabular} \\ \cline{2-3} 
                                     & dr-TMLE1    & \begin{tabular}[c]{@{}l@{}}Targeted maximum likelihood estimation \\ 
                                     (TMLE) using SuperLearner on \\
                                     linear models\end{tabular} \\ \cline{2-3} 
\multirow{-4}{*}{Doubly robust}      & dr-TMLE2    & \begin{tabular}[c]{@{}l@{}}TMLE using SuperLearner on \\ 
                                     machine-learning models \end{tabular} \\ \hline
\end{tabular}
\caption{Methods used in simulation studies. References for methods are: CEM \citep{iacus2012causal}, BART \citep{chipman2009bart}, SBW \citep{zubizarreta2015stable}, AIPW \citep{robins1994estimation}, SuperLearner \citep{van2007super}, and TMLE \citep{van2006targeted}.}
\label{tab:sim_methods}
\end{table}

We use default settings for all of the algorithms to standardize comparisons.
We implement BART, TMLE, and AIPW in \texttt{R} using defaults from the \texttt{dbarts} \citep{dorie2023dbarts}, \texttt{tmle} \citep{gruber2012tmle}, and \texttt{AIPW} \citep{zhong2021aipw} packages, respectively.
For SuperLearner, the linear models include a simple mean, linear regression, and generalized linear regression models; the machine-learning models include the linear models as well as generalized additive models, random forests, BART, and XGBoost \citep{chen2016xgboost}.

To standardize comparisons for the matching methods, for each dataset we use the distance metric implied by the covariatewise caliper generated by coarsening each numeric covariate into five equally spaced bins.
I.e., we use $d_V^{(\infty)}$ and a diagonal $V$ with entries $V_{kk} = 5 / (max(X_k) - min(X_k))$.
We do not tune the covariatewise caliper, assuming zero domain knowledge about variable importance.
We conduct CEM using the same uniform coarsening of each covariate into five bins.

\subsection{Toy example simulation}
\label{sec:Toy-example-simulation}
To build intuition for situations in which CSM may perform well, we first show results from the simulation based on the toy examples in Section \ref{sec:toy}.
We simulate the covariates for 50 treated units each from multivariate normal distributions centered at $(0.25, 0.25)$ and $(0.75, 0.75)$, and 225 control units each from multivariate normal distributions centered at $(0.25, 0.75)$ and $(0.25, 0.75)$, all with covariance matrices $\begin{bmatrix} 0.1^2 & 0 \\ 0 & 0.1^2 \end{bmatrix}$.
We then add 100 control units distributed uniformly on the unit square to ensure we do have some overlap.
We finally generate outcomes for each unit $(x_1, x_2)$ as:
$$Y = f_0(x_1, x_2) + Z \cdot \tau(x_1, x_2) + \epsilon,$$
where $f_0$ is the density function for a multivariate normal distribution centered at $(0.5, 0.5)$ with covariance matrix $\begin{bmatrix} 1 & 0.8 \\ 0.8 & 1 \end{bmatrix}$, 
$\tau(x_1, x_2) = 3x_1 + 3x_2$ is the true treatment effect, 
and $\epsilon \sim N(0, 0.5^2)$ is homoskedastic noise.
Figure \ref{fig:sim_toy_ex} plots a sample simulated dataset.
\begin{figure}[t]
    \centering
    \includegraphics[width=0.7\textwidth]{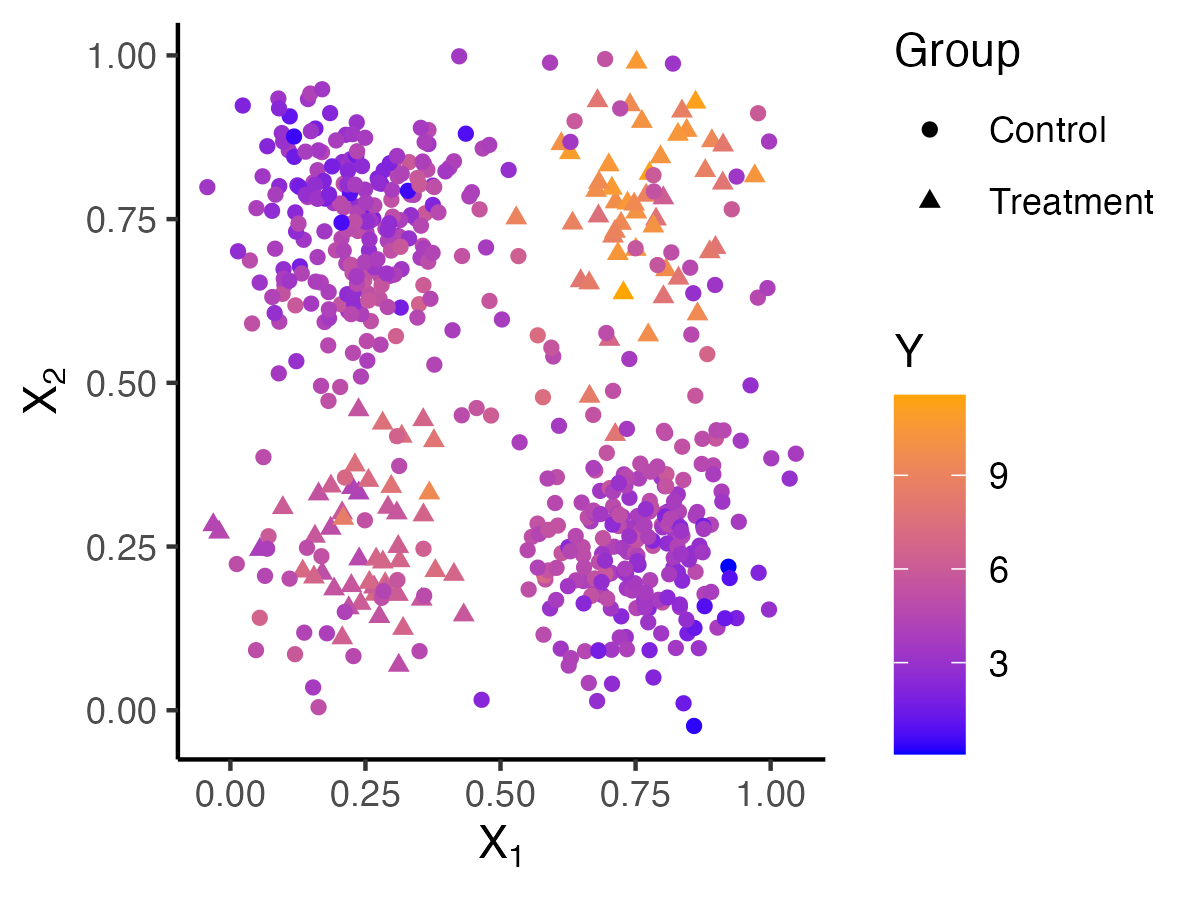}
    \caption{Sample simulated toy dataset.}
    \label{fig:sim_toy_ex}
\end{figure}

Figure \ref{fig:sim_toy_results} shows the root mean-squared error and absolute bias of the point ATT estimates for the various methods.
\begin{figure}[t]
    \centering
    \includegraphics[width=0.7\textwidth]{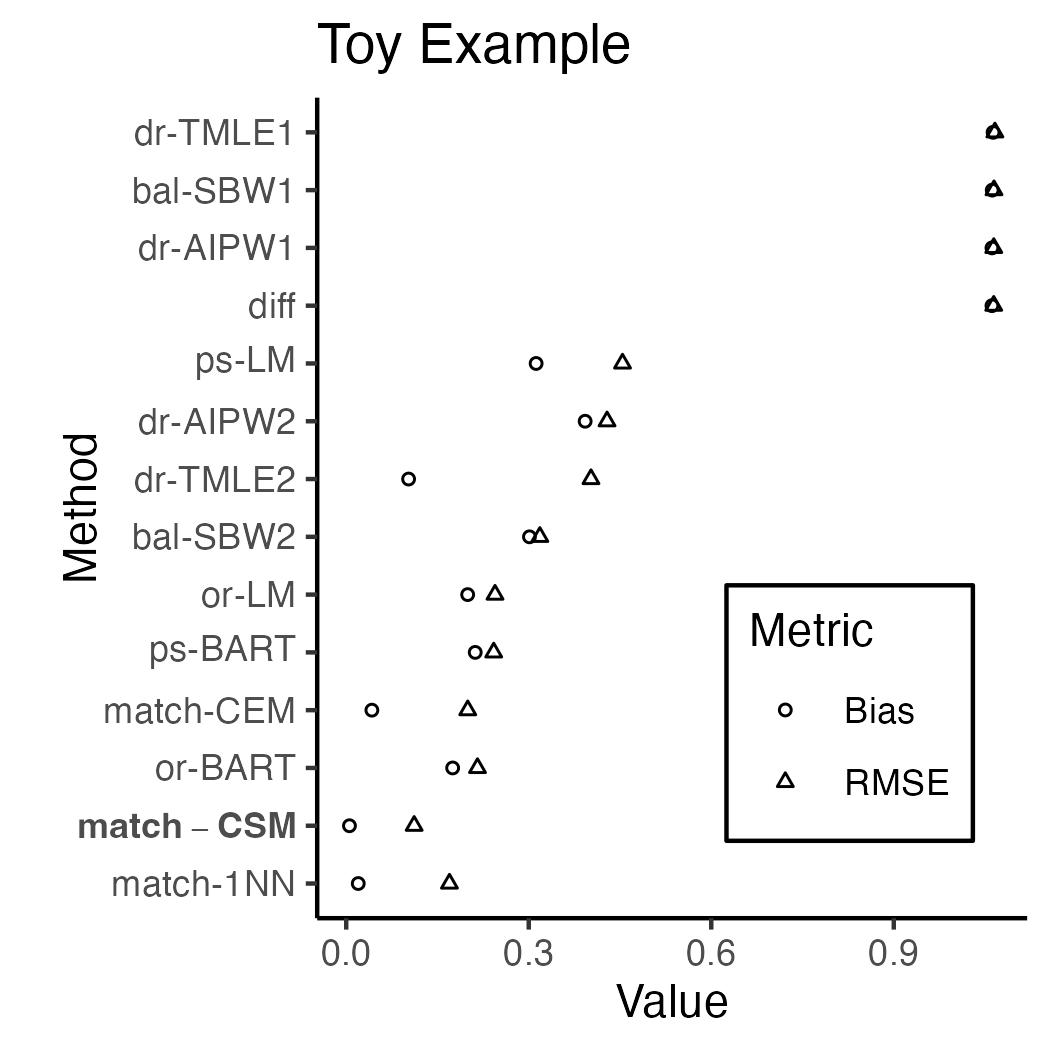}
    \caption{Method results on simulation based on toy example across 250 simulations.}
    \label{fig:sim_toy_results}
\end{figure}
We see that CSM performs well, even compared to complex machine-learning methods.

We emphasize that the goal of this simulation is not to demonstrate CSM's performance, but rather to illustrate settings in which we may expect good performance.
In the toy example, the interaction between the two covariates drives both the control potential outcome function $f_0(X_1, X_2)$ and the implicit propensity score.
Because the covariates are interacted with each other, joint covariate balance is very important.
As a result, those methods that do not target joint balance or otherwise do not have access to the interaction term between $X_1$ and $X_2$ perform poorly due to high levels of bias.

\subsection{Canonical simulations}
\label{sec:Canonical-simulations}
The datasets from \citet{kang2007demystifying}, \citet{hainmueller2012entropy}, and the 2016 American Causal Inference Conference competition \citep{dorie2019automated} have canonically been used to compare methods for observational causal inference.
Table \ref{tab:simdata} contains basic information about the settings we used for each dataset.
Please see the original papers for further details.

\begin{table}[t]
\begin{tabular}{|l|l|l|l|l|}
\hline
\rowcolor[HTML]{C0C0C0} 
Dataset                   & $p$ & $n_t$         & $n_c$         & \begin{tabular}[c]{@{}l@{}}Heterogeneous \\ effects?\end{tabular} \\ \hline
Kang \& Schafer, (2007)   & 4             & $\approx 500$   & $\approx 500$   & No                                                                          \\ \hline
Hainmueller et al. (2012) & 6             & 50            & 250           & No                                                                          \\ \hline
ACIC 2016                 & 10            & $\approx 350$ & $\approx 650$ & Yes                                                                         \\ \hline
\end{tabular}
\caption{Basic descriptions of simulated datasets. 
    We use the \citet{kang2007demystifying} as-is from the original paper; the ``high overlap'' and ``highly nonlinear'' outcome model condition for the data from \citet{hainmueller2012entropy}; and the ``step-function'' propensity and treatment model conditions for the ACIC 2016 data using 10 numeric covariates.}
\label{tab:simdata}
\end{table}

Figure \ref{fig:sim_results} summarizes the results of our simulations using these datasets and illustrates the tradeoffs made by CSM.
\begin{figure}[t]
    \centering
    \includegraphics[width=\textwidth]{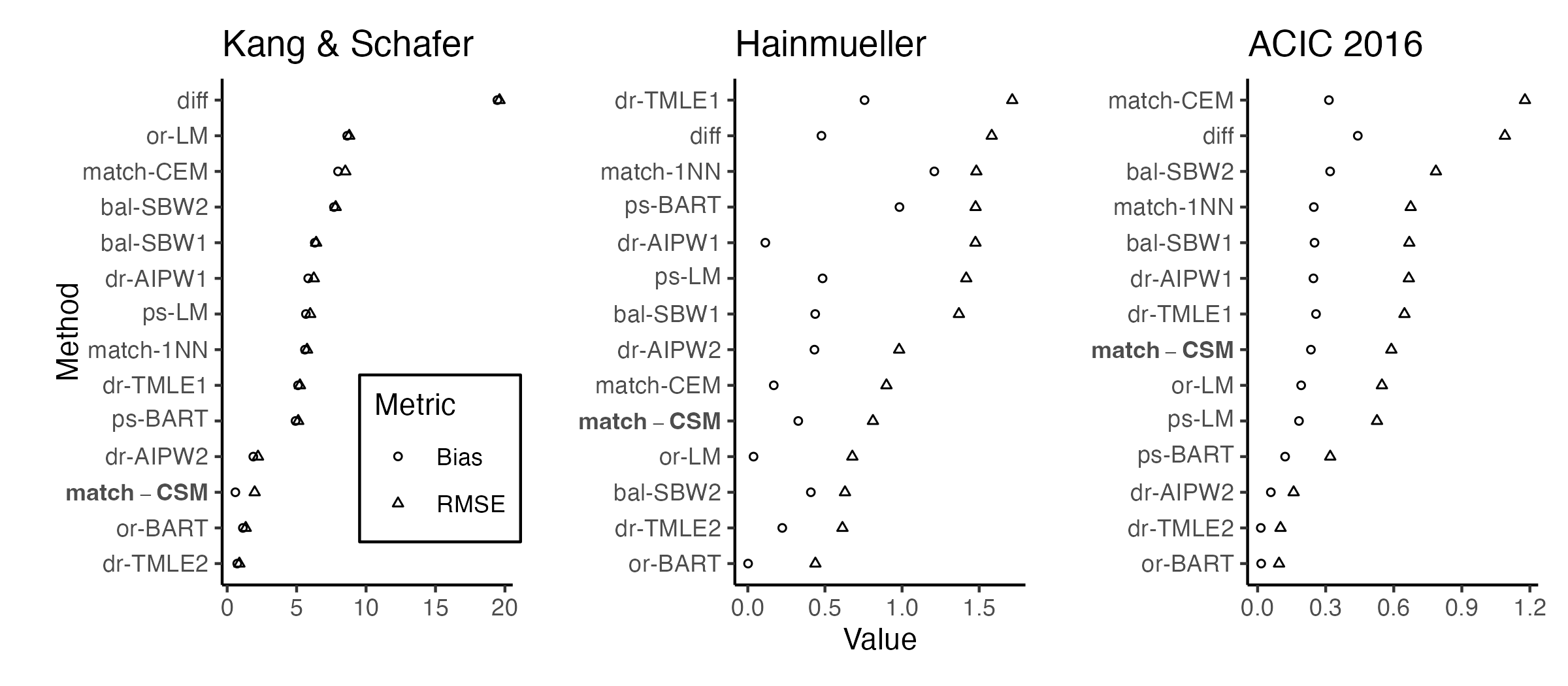}
    \caption{Method results on canonical simulated datasets across 250 simulations. 
        Results for CSM highlighted in bold font.  Four methods truncated at 1 for displaying the results.}
    \label{fig:sim_results}
\end{figure}
We see that CSM tends to outperform the simple matching, balancing, and outcome or propensity-score modeling approaches, but that it is outperformed by the complex machine-learning approaches known to do well on these types of datasets \citep{dorie2019automated}.
CSM is designed to maximize performance while preserving the simple intuitions underlying exact matching.
Under such constraints, it is remarkable that CSM performs nearly as well as black-box machine-learning approaches on the \citet{kang2007demystifying} and \citet{hainmueller2012entropy} datasets.

The trends in Figure \ref{fig:sim_results} highlight important concepts regarding locality in observational causal inference.
As the number of covariates increases, the matching methods tend to deteriorate in performance relative to the complex modeling methods.
Caliper-based matching methods only use local information.
This behavior protects them against incorrect extrapolation, but also prevents them from correctly extrapolating in cases where overlap is unnecessary.

In many simulated (and real) datasets, extrapolating marginal effects can improve counterfactual predictions.
For example, suppose increasing $X_1$ clearly increases outcomes among controls which have low values of $X_2$, and we would like to predict counterfactual outcomes for treated units with high values of $X_2$.
A linear model fit to the controls may extrapolate and assume that increasing $X_1$ always increases outcomes regardless of the value of $X_2$, while an exact matching estimator would not do so.
Increasing the number of covariates exaggerates this behavior, since models extrapolate more marginal trends.
Figure \ref{fig:sim_results} shows us that this type of model-based marginal extrapolation improves results for these simulated datasets, since in the considered DGPs there are few high-dimensional interactions to falsify such extrapolation.

Matching aims to produce transparent, model-free causal inferences.
As shown by Figure \ref{fig:sim_results}, this transparency generally comes with a cost in terms of performance.
The simulations show that while avoiding model-based extrapolation controls bias in the worst case (e.g., Proposition \ref{prop:biasbd_lip}), it can be too conservative in settings where such extrapolation is helpful.
Whether these costs are outweighed by the ability to clearly explain how counterfactual predictions are made depends on the particular causal inference setting.
Nonetheless, CSM performs competitively in these canonical simulations, making it an attractive option in settings where explainability is paramount.

\subsection{Assess the Quality of the Variance Estimator}
\label{sec:Inference-Results}

\begin{figure}[t]
    \centering
    \includegraphics[width=\linewidth]{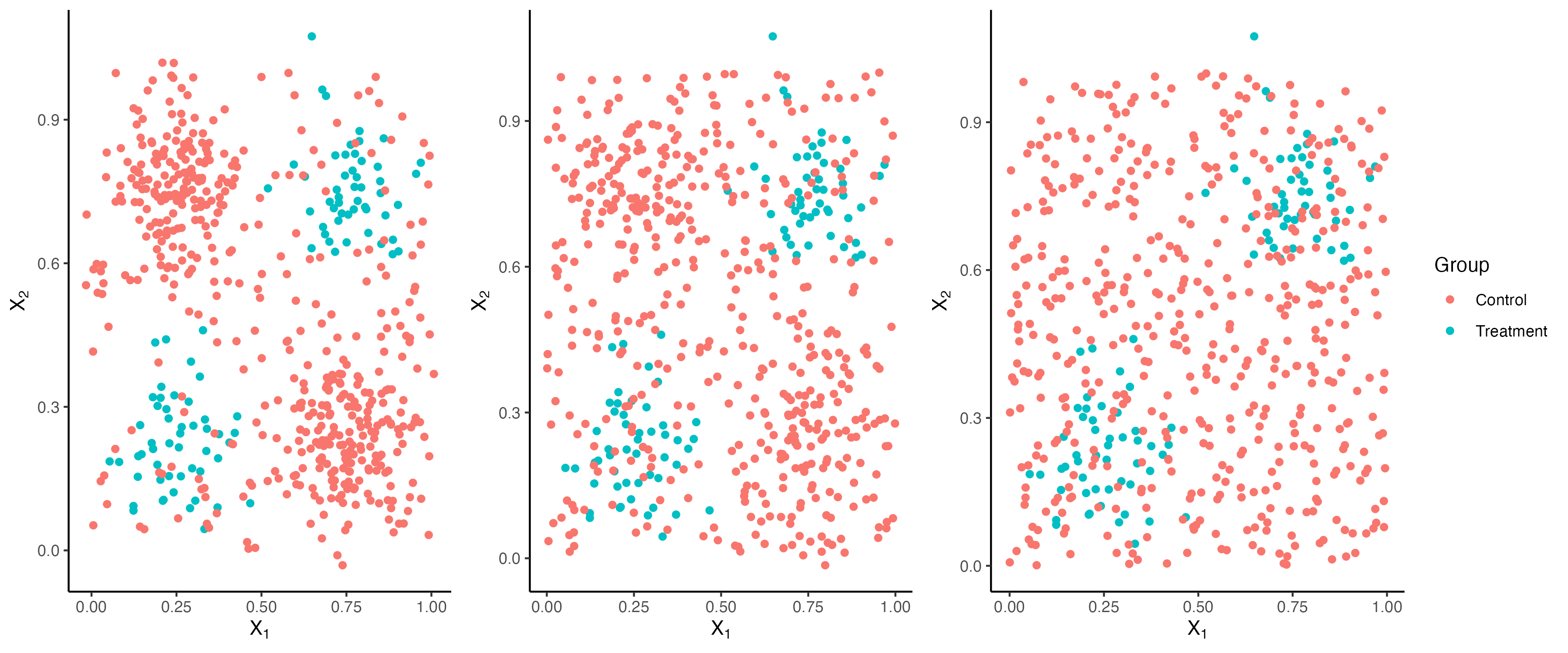}
    \caption{Toy example in different degree of overlaps. From left to right: very low, medium, and very high}
    \label{fig:sim_toy_3_overlaps}
\end{figure}

We next assess the variance estimator described in Section~\ref{sec:var-est}.
We revisit the example from Section~\ref{sec:Toy-example-simulation} but vary the degree of overlap.
Specifically, we increase the overlap by increasing the proportion of control units distributed uniformly on the unit square and reducing the proportion of controls centered at (0.25, 0.75) and (0.75, 0.25).
For the range of overlap, see the three illustrative datasets in Figure~\ref{fig:sim_toy_3_overlaps}.
These scenarios correspond to the same distribution of treated units, but different pools of potential controls to select matches from.
For each degree of overlap, we generate 500 trials.

Our primary estimation target is the SATT.

To capture tuning $c$ for bias reduction, we use an adaptive strategy for the caliper size, shrinking the caliper to select no more than five units in regions with great density, and expanding it to include at least one unit in regions of low density.
With larger calipers, the synthetic control optimization tends to select only a few units of the set when we have a low-dimensional covariate set, and the selected units tend to be more distant from the treated unit than others within the neighborhood defined by the caliper; for future work we could follow \citet{ben2021augmented} and regularize the weights to take more advantage of rich areas of control units.

\begin{table}[ht]
\centering
\centering
\begin{tabular}{lrrrrrr}
  \hline
Overlap & $\tilde{N}_C$ & avg $\widehat{SE}$ & avg $SE$ & Bias & RMSE & Coverage \\ 
  \hline
Very Low & 23.9 & 0.12 & 0.11 & 0.04 & 0.12 & 0.95 \\ 
  Low & 39.2 & 0.10 & 0.10 & 0.02 & 0.10 & 0.94 \\ 
  Medium & 51.5 & 0.09 & 0.08 & 0.01 & 0.08 & 0.96 \\ 
  High & 61.3 & 0.08 & 0.08 & 0.01 & 0.08 & 0.95 \\ 
  Very High & 70.2 & 0.08 & 0.08 & 0.01 & 0.08 & 0.94 \\ 
   \hline
\end{tabular}
\caption{Estimated vs. actual standard error for varying degrees of overlap. Columns are average effective sample size of control group (vs. 100 treated units), average estimated standard error, average true standard error, Bias, RMSE, and coverage of the SATT. Monte Carlo SEs are around 0.025 for the true standard error and RMSE, 0.01 for coverage, and $<0.001$ for the other measures.}
\label{tab:toy_overlap_A-E}
\end{table}

Table \ref{tab:toy_overlap_A-E} presents the results.
 The column ``avg $\widehat{SE}$'' shows the estimated standard error averaged over 500 iterations.
The ``avg $SE$'' column shows the true conditional standard error,\footnote{This value is   calculated as the standard deviation of $\hat{\tau} - bias - \tau$ across the simulations in order to remove variation due to simulation iterations having different true SATT values.}
Overall, we observe that our method slightly overestimates the true standard error due to the inflation caused by absorbing variation in $f_0$, as discussed in Section~\ref{sec:var-est}.

As expected, both the true standard error and bias go down as overlap increases, due to the steadily increasing effective sample size and  improved ability to find close matches.
In particular, when there is low overlap, we re-use many control units, and only at high overlap is our control group getting close to the treatment group in size.
Additionally, with high overlap, there are more control units near the concentration of treated units, allowing for closer matches and thus less bias.

Finally, the ``Coverage'' column shows the coverage of the 95\% confidence interval. The CSM method effectively maintains a small degree of bias and gives reasonably estimated standard errors, resulting in CI coverage close to the nominal 95\% across all the scenarios.

An alternate form of inference would some form of bootstrap.
Unfortunately, we found that bootstrap methods, including the weighted bootstrap in \cite{otsu2017bootstrap} applied to our context, and the na\"ive casewise bootstrap, did not work effectively.
We leave further development of bootstrap inference to future work.

\section{Applied Example: Effect of Improving Education Quality in Schools in Brazil}
\label{sec:ferman-data-analysis}

We illustrate our approach via a within study comparison based on a randomized control trial, ``Jovem de Futuro'' (Young of the Future), of schools in Brazil, following a study by \cite{barros2012impacto}.
Data from this intervention, targeting education quality, was also analyzed as a within study comparison by \citet{ferman2021matching}.

In a within study comparison, we try to recover a known impact via our observational study method.
Within study comparisons allow for assessment of how observational methods serve ``in the wild.''
A typical early example of this is the famous LaLonde paper \citep{lalonde1986evaluating}, but for a more extended discussion, see \cite{cook2008three}.
In particular, we take units enrolled in an experimental trial, but who did not receive treatment (the experimental control group) as our ``treatment'' and match them to a larger pool of units not in the experiment at all; see \cite{weidmann2021lurking} for further discussion of this form of within study comparison.

Jovem de Futuro combined two efforts: a) offering strategies and tools for school management boards to improve efficiency; b) offering grant money for improving education quality, conditional on school's standardized test score passing a certain threshold. 
The intervention lasted three years, and there were three rounds delivered: 2010-2012 (the 2010 implementation), 2011-2013 (the 2011 implementation), and 2012-2014 (the 2012 implementation).
We focus on the 2010 implementation.

The 2010 implementation had 15 treated schools and 15 control schools in Rio de Janeiro, and 39 treated schools and 39 control schools in Sao Paulo.
We set $Z_i = 1$ for the control schools who applied for the intervention program but were assigned to the no-intervention group, and $Z_i = 0$ for the non-participating schools.
Since there is no actual intervention for either the $Z_i=1$ and $Z_i=0$ schools, we expect the ``treatment effect'' to be zero, if our methods succeed in removing any confounding selection bias.
The sample sizes for Rio de Janeiro and for Sao Paulo are $n_T^{Rio} = 15, n_C^{Rio} = 966$ and $n_T^{SP} = 39, n_C^{SP} = 3481$. 

For each school, we have four matching variables: $X_1, X_2$, and $X_3$ are standardized test scores from 2007 to 2009, and $X_4$ is  an indicator of being in Sao Paolo.
Table \ref{tab:marginal_means} shows the pre-matching covariate mean difference. Overall, the schools in the RCT scored lower than those not in the RCT. 

Our outcome is the test scores in 2010(the outcome following the first year of intervention).

\begin{table}[ht]
    \centering
    \begin{adjustbox}{width=\textwidth}
    \begin{tabular}{lccc}
        Variable & Control (Z = 0) & Treated (Z = 1) & Difference (treated - control) \\
        \hline
        Score 2007 ($X_1$) & 4.7 & -2.8 & -7.5 \\
        Score 2008 ($X_2$) & 0.83 & -0.99 & -1.8 \\
        Score 2009 ($X_3$) & 2.3 & -2.5 & -4.8 \\
        In Sao Paulo ($X_4$) & 78\% & 72\% & -6\% \\
    \end{tabular}
    \end{adjustbox}
    \caption{Marginal means of the covariates in \cite{barros2012impacto} before matching. All variables have been multiplied by 100.}
    \label{tab:marginal_means}
\end{table}

\subsection{CSM Procedure}
We start our CSM analysis by choosing $L^\infty$ as the distance metric.
Next, we set the covariate-wise calipers.
For the standardized test scores $X_1$ to $X_3$, we set the covariate-wise calipers to 0.2, meaning we want matched schools to differ at most 0.2 standard deviations in test scores.
For $X_4$ (indicator of being in Sao Paolo), we set the covariate-wise caliper to $1/1000$, a small value to ensure the matching of $X_4$ is exact.  

The distance-metric caliper $c$ is usually set to a default of 1, as done in Proposition \ref{prop:CSM-is-MIB}, to give direct interpretation to the covariate-wise calipers $\boldsymbol{\pi}$.
In practice, however, given $\boldsymbol{\pi}$ we can tune $c$ to guarantee each treated unit gets enough but not too many matched controls, thereby optimally trading off data use and potential bias (see Section \ref{sec:assess-bias-variance-scm}).
When we set $c < 1$, we obtain tighter matches than initially planned with $\boldsymbol{\pi}$.
When we set $c > 1$, our matches can be worse than initially planned.

To tune $c$, we plot histograms of the distances of the top-1, top-2, and top-3 matches for each treated unit, and look for a natural break.
We see that, for most treated units, even the third worst match has a distance of around 0.35 or less.
We can thus focus on bias reduction, and set our distance-metric caliper $c$ to $0.35$.
See Appendix \ref{app:caliperchoice} for further details.

Some of our treated units have no controls within this set minimum caliper.
For these units, we use an adaptive caliper, matching the units to their nearest neighbor.
We then apply the synthetic control method to obtain weights for each set of matched controls.
Finally, we use these weighted controls to construct the point estimate and estimated standard error, using the inferential method outlined in Section \ref{sec:var-est}.

\subsection{Assessing the Covariate Balance}
Our chosen caliper results in 49 (91\%) treated units
having at least one matched control within a radius of $c = 0.35$.
These are our ``feasible'' treated units.
Among the 49 feasible treated units, all matched controls are within 0.2 standardized scores for the years 2007, 2008, and 2009. 
Figure \ref{fig:love-plot-ferman} illustrates how the differences in marginal means for three matching variables vary when adding infeasible treated units in order of increasing adaptive caliper size, until all 54 treated units are included.

By Proposition 4.3, the marginal mean difference of each covariate must be within $c\cdot\pi_k$, where $\pi_k$ is the covariate caliper of the $k$-th covariate, and $c$ is the distance-metric caliper.
For $X_1$ to $X_3$, this value is $0.35 \cdot 0.2 = 0.07$.
For $X_4$, this value is $0.35 \cdot \frac{1}{1000} = 3.5 \times 10^{-4}$.
The leftmost point in each plot in Figure \ref{fig:love-plot-ferman} shows the difference in marginal means between the treated units and their synthetic controls for the 49 feasible treated units, with differences all well below the corresponding $c\cdot \pi_k$ values. This result is not surprising: $c\cdot \pi_k$ represents a worst-case guarantee, and imbalances within each matched control set can cancel out. 

An important use of Figure \ref{fig:love-plot-ferman} is to provide a diagnostic on covariate balance.
The plot assesses the approximate joint balance through marginal covariate balance while including increasingly infeasible units.
While joint and marginal balance is guaranteed for feasible units, it is not guaranteed when more infeasible units are added. In this dataset, the marginal balances remain good even when all infeasible treated units are included.

\begin{figure}[t]
    \centering
    \includegraphics[width=0.9\linewidth]{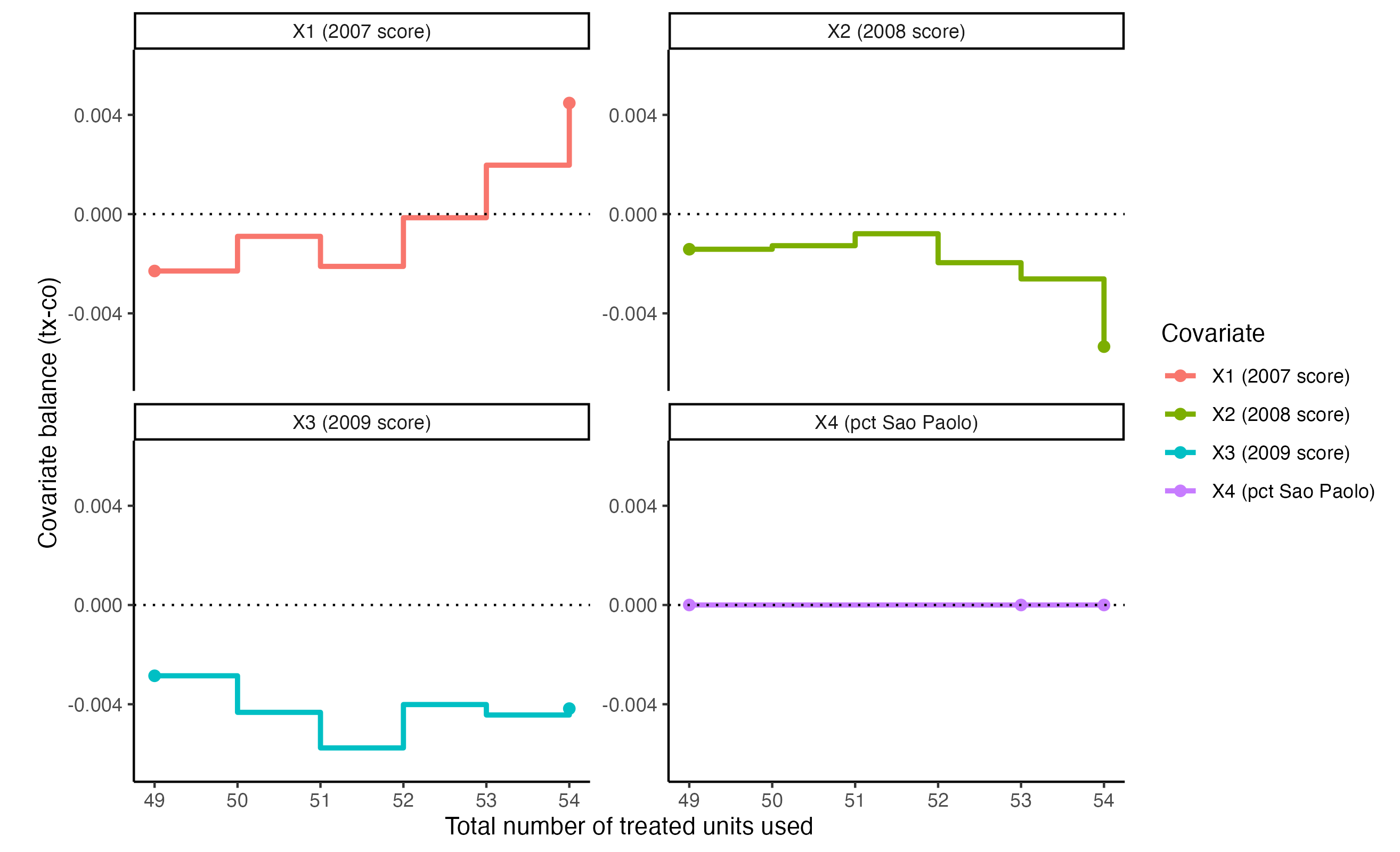}
    \caption{Love plot on all four matching variables as increasingly infeasible units are included.}
    \label{fig:love-plot-ferman}
\end{figure}

\subsection{Assessing the Impact of Synthetic Controls}
\label{sec:assess-bias-variance-scm}
 The synthetic control step after matching ideally improves the comparability of the controls matched to each treated unit, but can also reduce the effective sample size of the control group by downweighting some units.
 We next assess this bias-variance trade-off, comparing SCM with classic radial matching (where we take the simple average of all close controls as the counterfactual for each treated unit in turn) and one-nearest-neighbor (1-NN) matching.
 Both can be viewed as weights on the controls:

\begin{enumerate}
    \item Simple average weight: let $w_{jt} = \frac{1}{|\Ct|}$ for each control unit. This is the default weighting method used in the matching literature after matches are made.
    \item One-nearest-neighbor\footnote{If unit $t$ has $k$ unites tied for nearest neighbor, we assign each a weight of $w_{jt} = \frac{1}{k}$.}: 
        let $w_{jt} = \begin{cases}
            1 & \text{if unit } j \text{ is closest to unit } t \\
            0 & \text{otherwise.}
        \end{cases}$ 
        
        This weight corresponds to 1-to-1 matching.
\end{enumerate}

The results for SCM, average, and 1-NN weights on the feasible treated units are shown on Table~\ref{tab:synthetic_controls}.
We analyze the bias-variance trade-off through two metrics: the mean covariate difference (a proxy for bias, Equation \ref{eq:bias-variance}) and the effective sample size (ESS, a proxy for variance, Equation \ref{eq:true-variance}). 
The mean covariate difference, shown in the first column of Table~\ref{tab:synthetic_controls}, is the first-order term in Taylor’s expansion of bias.
It is also referred to as the extrapolation bias in \cite{kellogg2021combining}, representing how far the weighted control is from the treated unit in $d_V^{\infty}$ in the covariate space.
SCM performs better than simple average and 1-NN since it is optimized to minimize this bias.

Under homoskedasticity, the variance of an estimator is inversely associated with ESS, meaning a larger ESS implies smaller variance.
ESS will be larger when more controls are used and when weights are more evenly distributed among those selected.
Average weights achieve the highest ESS by assigning non-zero weights to all 202 distinct controls within the caliper distance of at least one of the 49 feasible units. In contrast, 1-NN uses only 49 distinct controls, as each treated unit is matched to a unique nearest neighbor. 
SCM falls between these scenarios, assigning zero weights to some control unit otherwise within caliper distance, but using multiple controls where possible.
Overall, SCM performs well in the bias-variance trade-off by achieving the lowest imbalance (bias) and moderate variance.

\begin{table}[ht]
    \centering
    \begin{adjustbox}{width=\textwidth}
    \begin{tabular}{lcccc}
               & Mean Individual & Median Individual &   &Number of \\
        Method &  Imbalance & Imbalance & ESS  &  Unique Controls \\
        \hline
        Feasible Treated Group & - & - & 49 & - \\
        CSM     & 0.057 & 0.000 & 95 & 155 \\
        Average & 0.118 & 0.091 & 202 & 580 \\
        1-NN    & 0.158 & 0.148 & 49 & 49 \\
    \end{tabular}
    \end{adjustbox}
    \caption{Effect of synthetic controls on bias-variance trade-off metrics of the FSATT estimate, comparing to average and 1-NN weighting.}
    \label{tab:synthetic_controls}
\end{table}

\subsection{FSATT and SATT Estimates}
\label{sec:FSATT}
We finally present the estimated treatment effect on the feasible units and also show the influence of infeasible treated units on the ATT estimate.
Before interpreting the results, recall that, given our within study design, we expect to see no treatment impact.
In other words, assuming our estimation is sound, we should expect an ``impact estimate'' of zero.

Figure \ref{fig:fsatt-ferman} presents the main results of the analysis, providing a series of confidence intervals given a steady increase in the number of units kept in the analysis.
Each confidence interval is constructed using the method in Section \ref{sec:var-est}.
The leftmost confidence interval, representing the feasible set, covers zero, indicating no significant effect as desired. As we include more infeasible units, the ATT varies slightly, but is overall stable, and all confidence intervals include 0.
This stability aligns with Figure \ref{fig:love-plot-ferman}, which shows covariate balances only vary slightly as more infeasible units are included.

\begin{figure}[t]
    \centering
    \includegraphics[width=\linewidth]{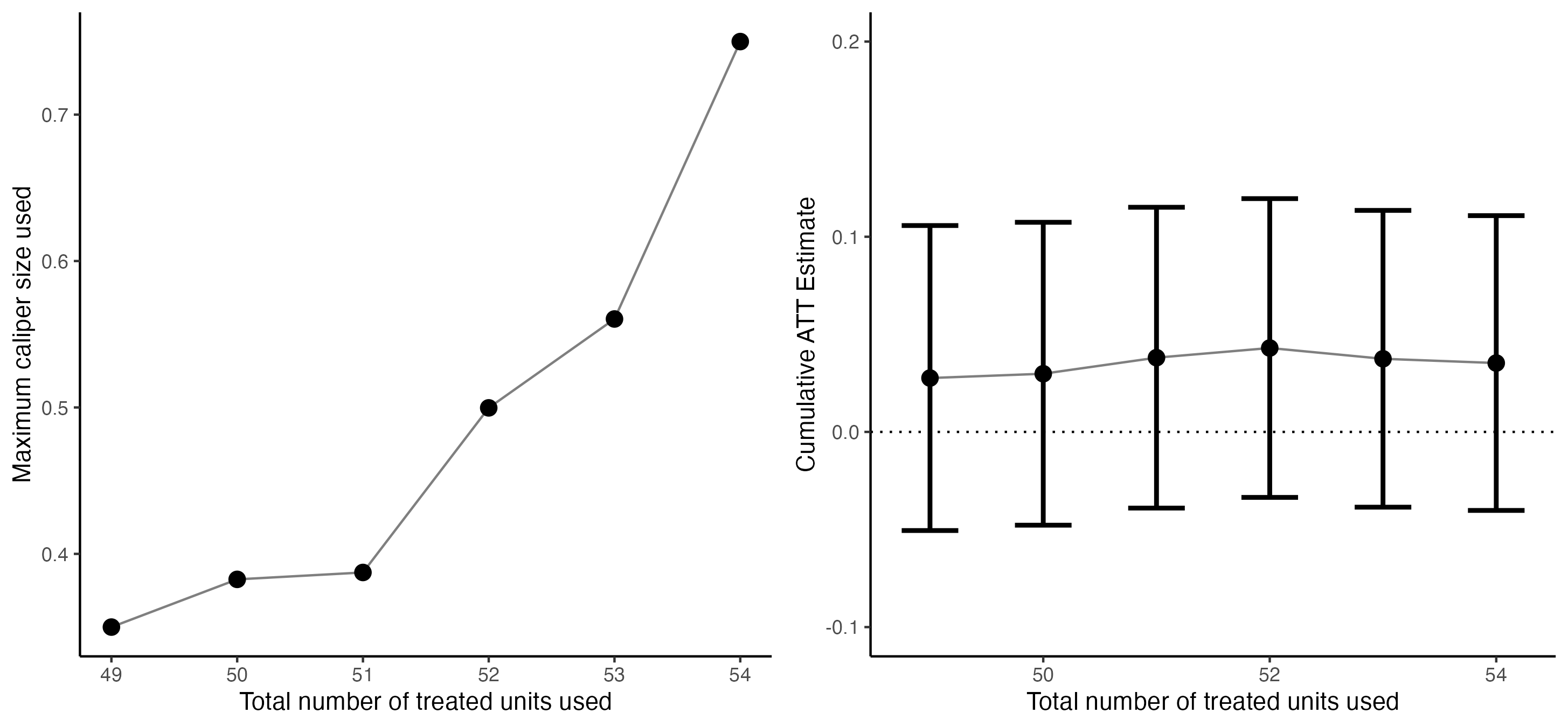}
    \caption{Estimate-estimand tradeoff plot. Maximum adaptive caliper sizes range from 0.35 (for the FSATT) to 0.75 (for the full ATT).}
    \label{fig:fsatt-ferman}
\end{figure}

The SATT estimates from CSM and the other two methods are presented in Table~\ref{tab:point_estimates}. Using the point estimates and standard errors, all methods conclude no treatment effect. The standard error follows a trend where 1-NN is the largest, Average is the smallest, and CSM is in the middle, consistent with the trend we observed in Section~\ref{sec:assess-bias-variance-scm}: CSM uses more data than 1-NN, but sacrifices some precision to ideally reduce bias.

\begin{table}[ht]
    \centering
    \begin{tabular}{l|cc}
        Method & Point Est. & SE \\
        \hline
        CSM & 0.035 & 0.039 \\
        Average & 0.013 & 0.038 \\
        1-NN & 0.008 & 0.053 \\
    \end{tabular}
    \caption{Point estimates and standard errors for all three methods.}
    \label{tab:point_estimates}
\end{table}

\section{Discussion}

Matching methods enable researchers to simply and intuitively draw causal conclusions from observational data.
In practice, however, standard matching approaches typically need to sacrifice either performance or transparency to achieve results comparable to modern optimization and machine-learning methods for causal inference.
To address this challenge, this paper introduces Caliper Synthetic Matching.
CSM builds on the spirit of exact matching by preserving four key principles: intuitive distances, local matches, ability to monitor matching success, and transparent estimates.
We find that CSM can achieve comparable performance to modern methods on standard benchmark datasets, particularly when many covariates exhibit interaction effects.

CSM provides a framework for transparent causal inference, and each stage may be extended in various ways.
Future work might reconsider the question of distance-metric and caliper selection, optimizing the distance-metric scaling matrix $V$ using held-out data, or incorporating estimated covariate densities to optimize caliper lengths (for work in this area, see \citet{parikh2022malts}).
Dimension reduction on the number of covariates, possibly by using prognostic scores (as done in, e.g., \citep{aikens2020pilot}), could also help mitigate the curse of dimensionality.
Interpretable estimation methods other than standard synthetic controls may also be used, such as the modified synthetic controls approaches proposed by \citet{abadie2021penalized} and \citet{ben2021augmented}.
Ultimately, CSM represents a promising addition to the toolkit for researchers conducting causal inference using matching methods.

\bibliographystyle{apalike}
\bibliography{refs.bib}

\newpage

\appendix

\section{Practical considerations}

\subsection{How to select a caliper}
\label{app:caliperchoice}

The choice of caliper size is a notorious practical problem for caliper-based matching methods.
Ideally, the researcher would have a covariatewise caliper $\boldsymbol{\pi}$ in mind, so that the distance-metric caliper can be simply to equal 1, as in Proposition \ref{prop:distmetriccal}.
More generally, given a fixed distance metric, the choice of $c$ should be chosen based on an a priori desired level of bias control rather than a post hoc assessment based on the observed data.
For example, a researcher who wants to ensure that all matches lie within 0.5 standard deviations of each other in each covariate, could restrict $d_V^{(\infty)}(\Xt, \Xj) \leq 0.5$ with $V = diag\{\frac{1}{sd(X_k)}, k=1, \dots, p\}$.

In practice, however, researchers may want to select a distance metric caliper $c$ that ``optimally'' trades off data use and potential bias.
To visualize this tradeoff, we suggest making a histogram of the distances between the treated and nearby control units in the data: Start by computing the $n_t \times n_c$ distance matrix used to identify the within-caliper control units for each treated unit. Then, plot the smallest $k$ distances $d(\mathbf{X}_t, \mathbf{X}_j)$ for each $t \in \mathcal{C}_t$ on a histogram to see how ``far" the control units tend to be from the treated units.
By doing this, we avoid needing to estimate the $p$-dimensional density of the treated and control units, which is generally challenging to do.

In Figure \ref{fig:hist-top-k-distances}, we created three histograms: closest distance, second closest distance, and third closest distance for each $t \in \Ct$ in the Brazilian school dataset discussed in Section \ref{sec:ferman-data-analysis}, for units that are exactly matched on the categorical covariates.
We see that $c=0.35$, a value slightly above the 75th percentile of the third closest distance, is a good cut-off since most $d(\Xt, \Xj)$ lies on the left of $0.35$.
The 0.35 cut-off also captures most of the third closest and second closest distances, indicating we would get at least three units matched for most of our treated units.
In general, one could look for peaks around particular values of $d(\Xt, \Xj)$, as expanding the fixed caliper to include those peaks can greatly increase the effective sample size of the data.

\begin{figure}[t]
    \centering
    \includegraphics[width=0.9\linewidth]{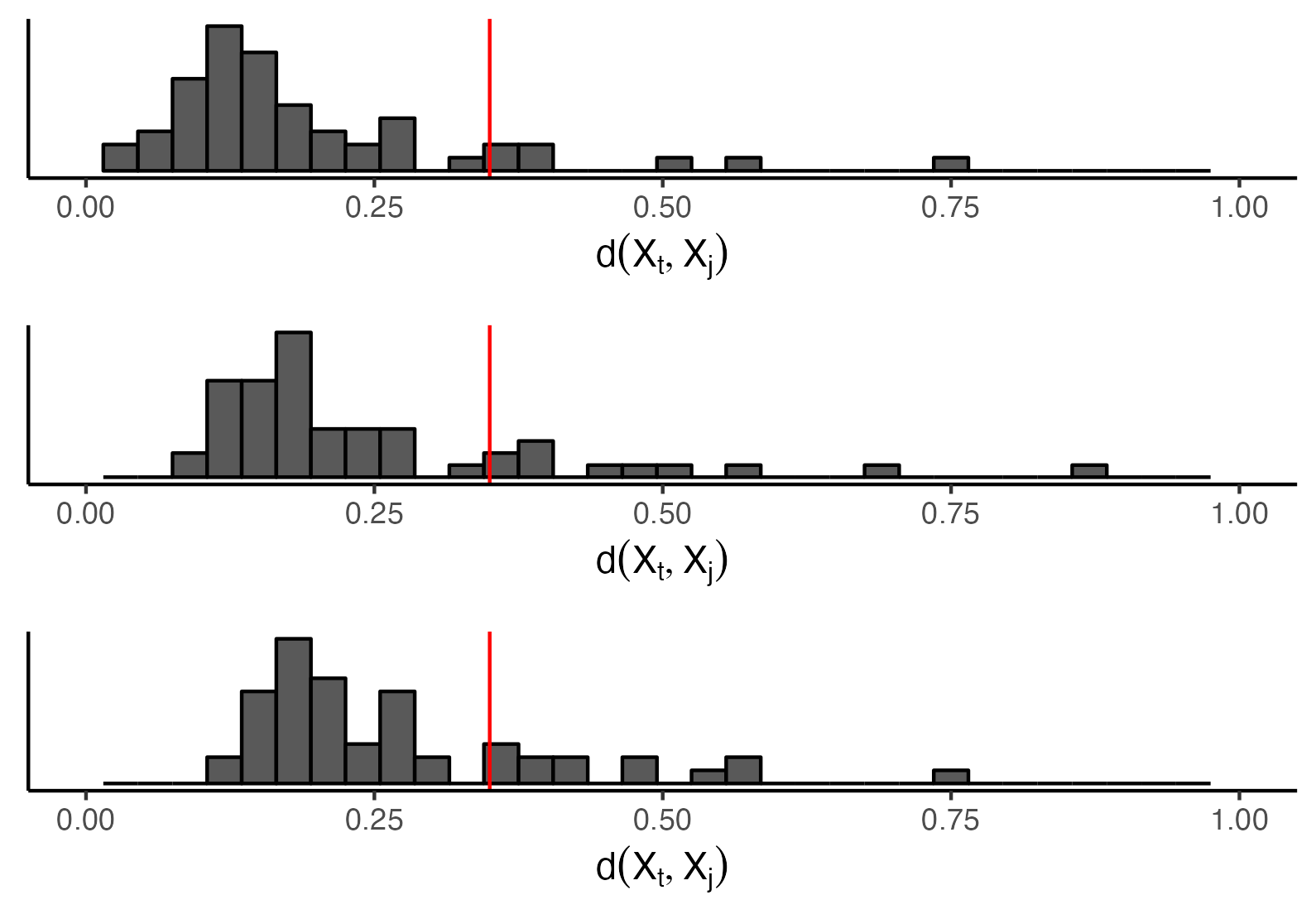}
    \caption{Histogram of distances between treated units and control units.}
    \label{fig:hist-top-k-distances}
\end{figure}

\section{Derivations and Technical details}

\subsection{Distance metrics in multiple dimensions}
\label{app:distmetrics}

Formally, a distance metric on $\Rp$ is a (non-negative) function $d(\cdot, \cdot): \Rp \times \Rp \to \R$ that satisfies the following for $x,y,z \in \Rp$:
\begin{enumerate}
    \item $d(x,y) = 0 \iff x = y$
    \item $d(x,y) = d(y,x)$
    \item $d(x,z) \leq d(x,y) + d(y,z)$ [Triangle inequality]
\end{enumerate}
These three properties satisfy our intuitive understanding of measuring distance between two points: we measure a distance of zero if and only if the two points concide with each other; the distance is the same regardless of which point we start measuring from; and the distance cannot be made shorter by passing through a third point.

Now consider a scaled $L_2$ norm on $\Rp$, which measures the length of a given vector $u \in \Rp$ as:
$$||u||_V^{(2)} = \sqrt{u^t (V^T V) u}.$$
We say that the scaled distance metric in Equation \ref{eq:l2dist} is \textit{induced} by the scaled $L_2$ norm defined above, since for $x,y \in \Rp$ we can write:
$$d_V^{(2)}(x,y) = ||x - y||_V^{(2)}.$$
The same relationship holds for the scaled distance metric in Equation \ref{eq:linfdist} and the scaled $L_2$ norm $||u||_V^{(\infty)} = \max|Vu|.$

Induced distance metrics in $\Rp$ possess desirable properties for $x,y \in \R$:
\begin{itemize}
    \item Translation invariance: for $c \in \R$, $d(x,y) = d(x+c, y+c)$
    \item Absolute homogeneity: for $c \in \R$, $d(cx, cy) = |c| d(x,y)$
\end{itemize}
Again, these properties are intuitive: moving two points by the same amount does not change the distance between them, and scaling two points scales the distance between them.

In deriving our formal results below, we will need the following simply lemma about our distance metrics:

\begin{lemma}
\label{lem:unitdist}
For any translation-invariant, absolutely homogeneous distance metric $d(\cdot, \cdot)$ on a metric space, $d(\mathbf a + c \mathbf v, \mathbf a) = c$ for $c \in \R$, $\mathbf a \in \Rp$ and unit vector $\mathbf v \in \Rp$.
\end{lemma}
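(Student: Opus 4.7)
The plan is a two-step reduction using the two stated properties of $d(\cdot,\cdot)$ in turn, together with the definition of a unit vector (namely, a vector $\mathbf v \in \Rp$ with $d(\mathbf v, \mathbf 0) = 1$ under the induced norm, as discussed just above the statement).

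First, I would apply translation invariance with the translation $-\mathbf a$ to eliminate the anchor point $\mathbf a$:
\begin{equation*}
    d(\mathbf a + c \mathbf v, \mathbf a)
    = d\bigl((\mathbf a + c \mathbf v) - \mathbf a,\; \mathbf a - \mathbf a\bigr)
    = d(c \mathbf v, \mathbf 0).
\end{equation*}
Next, I would apply absolute homogeneity, viewing $\mathbf 0 = c \cdot \mathbf 0$, to pull the scalar $c$ out of the metric:
\begin{equation*}
    d(c \mathbf v, \mathbf 0) = d(c \mathbf v, c \mathbf 0) = |c| \, d(\mathbf v, \mathbf 0) = |c|,
\end{equation*}
where the last equality is by the unit-vector assumption. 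The stated equality $= c$ then follows under the implicit convention (consistent with the caliper usage throughout the paper) that $c \geq 0$; otherwise the conclusion should read $|c|$.

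There is no real obstacle here: the lemma is essentially a bookkeeping consequence of the two structural properties the hypothesis singles out. The only thing worth being careful about is to state clearly which norm (and hence which notion of ``unit vector'') is in force, so that the step $d(\mathbf v, \mathbf 0) = 1$ is unambiguous. This matters because the lemma will later be invoked for both $d_V^{(2)}$ and $d_V^{(\infty)}$, and the set of unit vectors differs between these two metrics; but the argument itself is identical in either case, since both metrics are translation-invariant and absolutely homogeneous (being induced by norms).
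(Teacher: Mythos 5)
Your proof is correct and takes essentially the same route as the paper's: translation invariance to reduce to $d(c\mathbf v, \mathbf 0)$, then absolute homogeneity, then the unit-vector property. Your caveat that the result is $|c|$ unless one assumes $c \geq 0$ (as the caliper usage implicitly does) is a fair point that the paper's own proof glosses over when it silently replaces $|c|$ by $c$.
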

\begin{proof}
\begin{align*}
    d(\mathbf a + c \mathbf v, \mathbf a)
    &= d(c \mathbf v, 0) &\text{[translation invariance]} \\
    &= |c| \cdot d(\mathbf v, 0) &\text{[absolute homogeneity]} \\
    &= c \cdot ||\mathbf v|| &\text{[def. metric-induced norm]} \\
    &= c &\text{[unit vector } v\text{]}
\end{align*}
\end{proof}

\subsection{Lipschitz functions in multiple dimensions}
\label{app:lipschitz}

Recall that if a function $f: \R \to \R$ is Lipschitz($\lambda$), then for any $x, a \in \R$:
\begin{equation*}
    |f(x) - f(a)| \leq \lambda |x-a|.
\end{equation*}
This implies that the function's derivatives are bounded by $\lambda$: we cannot grow faster than $\lambda$ as we go from $x$ to $a$.

In higher dimensions, $|x-a|$ is no longer scalar-valued;
as a result, Lipschitz functions must be defined with respect to a distance metric.
Formally speaking, we equip $\Rp$ with a distance metric $d(\cdot, \cdot)$ of the form given by Equation \ref{eq:l2dist} or \ref{eq:linfdist}.
Then $f: \Rp \to \R$ is Lipschitz($\lambda$) with respect to distance $d(\cdot, \cdot)$ if for any $\mathbf{x}, \mathbf{a} \in \mathbb{R}^n$:
\begin{equation*}
    |f(\mathbf x) - f(\mathbf  a)| \leq \lambda d(\mathbf  x, \mathbf{a}).
\end{equation*}
Notably, a function can only be Lipschitz \textbf{relative to a given distance metric}, so, e.g., a function that is Lipschitz$(\lambda)$ with respect to $L_2$ distance may not be Lipschitz$(\lambda)$ with respect to $L_\infty$ distance.

Multivariate Lipschitz functions have bounded derivatives like their unidimensional counterparts.
In particular, Lemma \ref{lem:lipbdsdd} shows that the directional derivatives of any Lipschitz function are bounded.
\begin{lemma}[Lipschitz bound on directional derivative]
\label{lem:lipbdsdd}
Suppose $f: \Rp \to \R$ is Lipschitz($\lambda$) with respect to $d(\cdot, \cdot)$. For any $\mathbf x \neq \mathbf a$, denote the unit vector \( \mathbf{v} = \frac{\mathbf{x}-\mathbf{a}}{d(\mathbf{x},\mathbf{a})} \). Then for direction $v$
\begin{equation*}
    \nabla_{\mathbf{v}} f(\mathbf{a}) \leq \lambda
\end{equation*}
if $\nabla_{\mathbf{v}} f(\mathbf{a})$ is defined.
\end{lemma}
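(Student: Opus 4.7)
The plan is to unwind the definition of the directional derivative and immediately apply the Lipschitz hypothesis evaluated on the points $\mathbf{a} + h \mathbf{v}$ and $\mathbf{a}$.

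First I would write
\begin{equation*}
    \nabla_{\mathbf{v}} f(\mathbf{a}) = \lim_{h \to 0} \frac{f(\mathbf{a} + h \mathbf{v}) - f(\mathbf{a})}{h},
\end{equation*}
which exists by hypothesis. By the Lipschitz property of $f$ with respect to $d(\cdot,\cdot)$,
\begin{equation*}
    |f(\mathbf{a} + h \mathbf{v}) - f(\mathbf{a})| \leq \lambda \, d(\mathbf{a} + h \mathbf{v}, \mathbf{a}).
\end{equation*}
Then I would invoke Lemma \ref{lem:unitdist} with the unit vector $\mathbf{v}$ and scalar $h$ to rewrite $d(\mathbf{a} + h \mathbf{v}, \mathbf{a}) = |h|$ (using the absolute-homogeneity version made explicit in that lemma's proof). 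Dividing by $|h|$ yields
\begin{equation*}
    \left| \frac{f(\mathbf{a} + h\mathbf{v}) - f(\mathbf{a})}{h} \right| \leq \lambda
\end{equation*}
for all nonzero $h$. Passing to the limit as $h \to 0$ and using continuity of the absolute value gives $|\nabla_{\mathbf{v}} f(\mathbf{a})| \leq \lambda$, which is strictly stronger than the stated $\nabla_{\mathbf{v}} f(\mathbf{a}) \leq \lambda$ and so implies it.

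I do not expect any substantive obstacle: the result is essentially a direct translation of the one-dimensional fact that Lipschitz functions have bounded derivatives, and Lemma \ref{lem:unitdist} packages the only nontrivial ingredient, namely that the induced distance metric sees the unit vector $\mathbf{v}$ as having unit length. The only thing worth pausing on is making sure the direction in which we pass to the limit (two-sided vs.\ one-sided in $h$) is handled uniformly; because the Lipschitz bound places absolute values on both sides of the inequality, the argument is insensitive to the sign of $h$, so both one-sided limits obey the same bound and the two-sided limit, when it exists, does as well.
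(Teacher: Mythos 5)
Your proposal is correct and follows essentially the same route as the paper: unwind the definition of the directional derivative, use Lemma \ref{lem:unitdist} to identify $d(\mathbf{a}+h\mathbf{v},\mathbf{a})$ with $|h|$, and apply the Lipschitz bound before passing to the limit. Your explicit use of absolute values is in fact slightly cleaner than the paper's version (which silently treats $h$ as positive) and yields the stronger two-sided bound $|\nabla_{\mathbf{v}} f(\mathbf{a})| \leq \lambda$, but this is a refinement of the same argument rather than a different approach.
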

\begin{proof}
\begin{align*}
    \nabla_{\mathbf{v}} f(\mathbf{a}) 
    &= \lim_{h \to 0} \frac{f(\mathbf a + h \mathbf v) - f(\mathbf a)}{h} &\text{[def. directional derivative]}\\
    &= \lim_{h \to 0} \frac{f(\mathbf a + h \mathbf v) - f(\mathbf a)}{d(\mathbf a + h \mathbf v, \mathbf a)} &\text{Lemma }\ref{lem:unitdist}\\
    &\leq \lambda. &\text{def. Lipschitz}
\end{align*}
\end{proof}

\subsection{Proof that CSM is in the MIB (Monotonic Imbalance Bounding) Class}
\label{app:mib}

\citet{iacus2011multivariate} defines the monotonic imbalance bounding (MIB) class of matching methods as follows:
\begin{definition}
\label{def:mib}
    A matching method is MIB for a function $f(\chi)$ of the dataset $\chi$ with respect to distance metric $d(\cdot, \cdot)$ if there exists some monotonically increasing function $\gamma_{f,d}(\cdot)$ on any non-negative p-dimensional vectors $\boldsymbol \pi \in \Rp_+$ such that:
    \begin{align}
        d(f(\chi_{m_T(\boldsymbol{\pi})}), f(\chi_{m_C(\boldsymbol{\pi})})) \leq \gamma_{f,d}(\boldsymbol{\pi}) \label{eq:MIB}
    \end{align} 
\end{definition}
Here, $f$ is function that summarizes a dataset.
$\boldsymbol{\pi} \in \R^p$ is the ($p$-dimensional) vector of tuning parameters. In our paper, $\boldsymbol{\pi}$ is the vector of covariate-wise calipers.  $\chi_{m_T(\boldsymbol{\pi})}$ and $\chi_{m_C(\boldsymbol{\pi})}$ are datasets of matched treated and matched control units, when we set our parameters to $\boldsymbol{\pi}$.
The function $\gamma_{f,d}(\cdot)$ is monotonically increasing if it is non-decreasing in every dimention of $\boldsymbol{\pi}$ and strictly increasing in at least one of them. 
This essentially says as we reduce any dimension of $\boldsymbol{\pi}$, the bound on $d(\cdot,\cdot)$ also decreases.

We show that CSM is MIB by choosing appropriate functions $f$, $d$, and $\gamma$ to verify Equation~\ref{eq:MIB}. We select $f$ to be the weighted mean for the $k$-th covariate, where the weight is obtained from matching. Specifically, we define $\bar{X}^{\mathbf{w}}_{Tk}$ and $\bar{X}^{\mathbf{w}}_{Ck}$ as the weighted means for the treated and control units after matching: 
\begin{align*}
 f(\chi_{m_T(\boldsymbol{\pi})}) &= \bar X^{\mathbf w}_{Tk} =\frac{1}{n_T} \sum_{t \in \mathcal{T}} X_{tk}\\
 f(\chi_{m_C(\boldsymbol{\pi})}) &= \bar X^{\mathbf w}_{Ck} = \frac{1}{n_T} \sum_{t \in \mathcal{T}} \sum_{j \in \mathcal{C}_t} w_{jt} X_{jk} .
\end{align*}

We further choose $d(x,y) = |x-y|$ and $\gamma_{f,d}(\boldsymbol{\pi}) = \gamma_k(\pi_k)$. Then Equation \ref{eq:MIB} becomes $|\bar{X}_{T k} - \bar{X}_{C k}| \leq \pi_k$. This inequality is established by Proposition \ref{prop:CSM-is-MIB}, hence CSM is MIB. 

We now provide a proof of Proposition \ref{prop:CSM-is-MIB}. First, we introduce a useful lemma:

\begin{lemma}
\label{prop:distmetriccal}
    Given covariatewise caliper $\boldsymbol{\pi} \in \Rp_{+}$ on units $t$ and $j$, for scaling matrix $V = diag\{\frac{1}{\boldsymbol{\pi}}\}$:
    \begin{enumerate}[label=(\alph*)]
        \item $d^{(2)}_V(\Xt, \Xj) \leq 1 \implies |X_{tk}-X_{jk}| \leq \pi_k \text{ for } k=1,\dots,p $
        \item $ d^{(\infty)}_V(\Xt, \Xj) \leq 1 \iff |X_{tk}-X_{jk}| \leq \pi_k \text{ for } k=1,\dots,p $
    \end{enumerate}
\end{lemma}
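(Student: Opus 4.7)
The plan is to prove both parts by direct expansion of the scaled distance metrics once we plug in the specific diagonal choice $V = \operatorname{diag}\{1/\pi_1,\dots,1/\pi_p\}$. Since $V$ is diagonal and positive, $V^T V = \operatorname{diag}\{1/\pi_1^2,\dots,1/\pi_p^2\}$, and the two distances reduce to very transparent expressions in the componentwise scaled differences $|X_{tk}-X_{jk}|/\pi_k$.

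For part (a), I would write out
\[
d^{(2)}_V(\Xt,\Xj)^2 \;=\; (\Xt-\Xj)^T V^T V (\Xt-\Xj) \;=\; \sum_{k=1}^{p} \frac{(X_{tk}-X_{jk})^2}{\pi_k^2}.
\]
Each summand is non-negative, so any individual summand is at most the whole sum. Therefore, if the sum is at most $1$, then for every $k$ we have $(X_{tk}-X_{jk})^2/\pi_k^2 \le 1$, which rearranges to $|X_{tk}-X_{jk}| \le \pi_k$. This gives only a one-way implication because the $L_2$ norm pools contributions across coordinates: small joint distance does not uniquely determine any single component, but large per-component imbalance would have already blown up the joint sum.

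For part (b), I would note that
\[
d^{(\infty)}_V(\Xt,\Xj) \;=\; \sup_{k=1,\dots,p} |V(\Xt-\Xj)|_k \;=\; \max_{k=1,\dots,p} \frac{|X_{tk}-X_{jk}|}{\pi_k},
\]
since $V$ is diagonal and the sup over a finite index set is a max. A maximum of a finite collection of non-negative numbers is at most $1$ if and only if every element is at most $1$; hence $d^{(\infty)}_V(\Xt,\Xj)\le 1$ is equivalent to $|X_{tk}-X_{jk}|/\pi_k \le 1$ for all $k$, i.e.\ $|X_{tk}-X_{jk}|\le \pi_k$ for all $k$. The reverse direction that fails in (a) holds here precisely because the $L_\infty$ norm keeps the contributions coordinatewise rather than aggregating them.

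There is no real obstacle here: the lemma is essentially a definitional unpacking. The only small thing to be careful about is the asymmetry between (a) and (b), i.e.\ explicitly pointing out why (a) is only an implication while (b) is an equivalence, so that the reader understands the direction of the bound in Proposition~\ref{prop:CSM-is-MIB} that follows. Once this lemma is in hand, Proposition~\ref{prop:CSM-is-MIB} then reduces to averaging these per-pair bounds across the matched sets via the triangle inequality for weighted means.
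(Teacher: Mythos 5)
Your proposal is correct and matches the paper's argument: part (a) rests on the same observation that a single summand $(X_{tk}-X_{jk})^2/\pi_k^2$ is bounded by the full sum (the paper phrases this as a contradiction/contrapositive, you argue directly, which is the same inequality), and part (b) is the same definitional unpacking of the scaled $L_\infty$ maximum. No gaps.
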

\begin{proof}
    \begin{enumerate}[label=(\alph*)]
        \item Without loss of generality, suppose for contradiction that $|X_{tk}-X_{jk}| > \pi_k \text{ for } k=1$. 
        Then:
        \begin{align*}
            d^{(2)}_V(\Xt,\Xj) 
            = \sqrt{\sum_{k=1}^p \frac{(X_{tk}-X_{jk})^2}{\pi_k^2}}
            \geq \sqrt{\frac{(X_{t1}-X_{j1})^2}{\pi_1^2}}
            > 1.
        \end{align*}
        \item This follows from definitions:
        \begin{align*}
            d^{(\infty)}_V(\Xt, \Xj) \leq 1
            &\iff \sup_{k = 1, \dots, p} |\frac{X_{tk} - X_{jk}}{\pi_k}| \leq 1 \\
            &\iff |X_{tk}-X_{jk}| \leq \pi_k \text{ for } k=1,\dots,p 
        \end{align*}
    \end{enumerate}
\end{proof}

\textit{Proof of Proposition \ref{prop:CSM-is-MIB}:}

Applying Lemma \ref{prop:distmetriccal} above, \begin{align*}
   |\bar X^{\mathbf w}_{T k}- \bar X^{\mathbf w}_{C k}| &=|\frac{1}{n_T} \sum_{t \in \mathcal{T}}( X_{tk} - \sum_{j \in \mathcal{C}_t} w_{jt} X_{jk}) | \\
   &= |\frac{1}{n_T} \sum_{t \in \mathcal{T}} \sum_{j \in \mathcal{C}_t} w_{jt}  (X_{tk} - X_{jk}) | &\text{Sum of } w_{jt} \text{ over }  \mathcal{C}_t \text{ is 1}  \\
   &\leq \frac{1}{n_T} \sum_{t \in \mathcal{T}} \sum_{j \in \mathcal{C}_t} w_{jt} | X_{tk} - X_{jk} | &\text{Triangle inequality} \\
    & \leq \frac{1}{n_T} \sum_{t \in \mathcal{T}} \sum_{j \in \mathcal{C}_t} w_{jt} \pi_k &\text{Lemma \ref{prop:distmetriccal}} \\
    & = \pi_k  \left( \frac{1}{n_T} \sum_{t \in \mathcal{T}} \sum_{j \in \mathcal{C}_t} w_{jt}  \right) \\
    &= \pi_k 
\end{align*}

The last line is because $ \frac{1}{n_T} \sum_{t \in \mathcal{T}} \sum_{j \in \mathcal{C}_t} w_{jt} = \frac{1}{n_T} \sum_{t \in \mathcal{T}} 1 = 1 $

\subsection{Proof of Proposition \ref{prop:wass}}
\label{app:wass}

We prove the bound on the Wasserstein distance by choosing a specific coupling:
\begin{proof}
    For :
    \begin{align*}
    \mathcal{W}_q^{(\ell)} (f_T, f_C) 
    &= \inf_{\substack{\bX \sim f_T \\ \mathbf{Y} \sim f_C}} E\big[ d_V^{(\ell)}(\bX, \mathbf{Y})^q \big]^{1/q}
    \end{align*}
    
    We choose a coupling, generated as:
    \begin{enumerate}
        \item Sample $\bX \sim f_{T1}$ as $\bX \sim \text{Uniform}(\{\bX_1, \dots, \bX_{n_T}\})$, so $\bX = \Xt$ for some $t \in 1, \dots, n_T$.
        \item Sample $\mathbf{Y} \sim f_{C1} \mid \bX = \Xt$ as $\mathbf{Y} \sim \text{Weighted Uniform}(\{\bX_j : j \in \Ct\})$ for the control units matched to treated unit $t$, with their appropriate weights.
    \end{enumerate}
    This coupling ($f_{T1}, f_{C1}$) clearly produces the correct marginals ($f_{T}, f_{C}$), so we can write:
    \begin{align*}
    \mathcal{W}_q(f_T, f_C)^q
    &\leq E_{\substack{X \sim f_{T1} \\ Y \sim f_{C1}\mid X_t}} \big[ d(X, Y)^q \big] \\
    &= E_{X \sim f_{T1}} \Big[ E_{Y \sim f_{C1} \mid X} \big[ d(X,Y)^q \mid X \big] \Big] \\
    &= \frac{1}{n_T} \sum_{t \in \mathcal{T}} \Big[ \sum_{j \in \Ct} w_{jt} d(X_t, X_j)^q \Big] \\
    &\leq c^q \cdot \frac{1}{n_T} \sum_{t \in \mathcal{T}} \Big[ \sum_{j \in \Ct} w_{jt} \Big] \\
    &= c^q.
    \end{align*}
For $q = \infty$, take the limit as $q$ increases of the above result.
\end{proof}

\subsection{Proof of Proposition \ref{prop:meanbd}}
\label{app:meanbd}

Proposition \ref{prop:meanbd}, showing the bound on the covariate mean differences, is trivially proved using the fact that $d^{(2)}_V(\cdot, \cdot)$ and $d^{(\infty)}_V(\cdot, \cdot)$ are induced norms (see Appendix \ref{app:distmetrics}).

\begin{proof}
    \begin{align*}
        d_V(\bar{\bX}_T, \Bar{\bX}_C)
        &= d_V(\frac{1}{n_T} \sum_t \bX_t, \frac{1}{n_T} \sum_t \sum_{j \in \Ct} w_{jt} \bX_j) \\
        &= d_V(0, \frac{1}{n_T} \sum_t \sum_{j \in \Ct} w_{jt} \bX_j - \frac{1}{n_T} \sum_t \bX_t) &\text{[translation invariance]} \\
        &= d_V(0, \frac{1}{n_T} \sum_t \sum_{j \in \Ct} w_{jt} (\bX_j - \bX_t)) &[\sum_j w_{jt}=1] \\
        &= ||\frac{1}{n_T} \sum_t \sum_{j \in \Ct} w_{jt} (\bX_j - \bX_t)||_V &[||\cdot||_v \text{ induces } d_V(\cdot, \cdot)] \\
        &\leq \frac{1}{n_T} \sum_t \sum_{j \in \Ct} w_{jt} 
            ||\bX_j - \bX_t||_V &[\text{triangle inequality}] \\
        &= \frac{1}{n_T} \sum_t \sum_{j \in \Ct} w_{jt}
            d_V(\bX_j, \bX_t) \\
        &\leq \frac{1}{n_T} \sum_t \sum_{j \in \Ct} w_{jt} c & [d_V(\bX_j, \bX_t) \leq c] \\
        &= c
    \end{align*}
\end{proof}

\subsection{Derivations and technical details of the synthetic controls approach}
\label{app:scm}

Synthetic controls naturally control linear bias, due to the following observations.

\begin{proposition}
\label{prop:scm_is_projection}
Synthetic control weights project the treated unit's covariates onto the convex hull of the donor pool units' covariates.
\end{proposition}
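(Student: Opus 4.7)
The proposition is essentially a definition chase, so my plan is to organize the argument around unpacking two definitions: the convex hull, and the projection of a point onto a set with respect to a distance metric. I do not anticipate any substantive technical obstacle; the main task is articulating the correspondence carefully enough that the statement is unambiguous, especially because $d_V^{(\infty)}$ is not strictly convex and so the ``projection'' need not be unique.

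First, I would observe that the feasible set of covariate combinations induced by the SCM constraints,
\[
    \Big\{ \sum_{j \in \Ct} w_{jt} \Xj \;:\; w_{jt} \geq 0, \ \sum_{j \in \Ct} w_{jt} = 1 \Big\},
\]
is, by definition, exactly the convex hull $\mathrm{conv}(\{\Xj : j \in \Ct\})$ of the donor pool covariates. This reduces the SCM optimization to
\[
    \min_{\mathbf{Y} \in \mathrm{conv}(\{\Xj : j \in \Ct\})} d_V(\Xt, \mathbf{Y}),
\]
with the attained minimizer being $\widehat{\mathbf{X}}_t := \sum_{j \in \Ct} w_{jt}^{*} \Xj$ for $\mathbf{w}_t$ the SCM solution.

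Second, I would invoke the definition of a projection of $\Xt$ onto a closed set $S \subseteq \Rp$ with respect to a distance metric $d_V$: namely, any point $\mathbf{Y}^{*} \in S$ satisfying $d_V(\Xt, \mathbf{Y}^{*}) = \inf_{\mathbf{Y} \in S} d_V(\Xt, \mathbf{Y})$. Since $\mathrm{conv}(\{\Xj : j \in \Ct\})$ is a closed and bounded (hence compact) subset of $\Rp$, the infimum is attained, so the projection exists. The SCM minimizer $\widehat{\mathbf{X}}_t$ satisfies this definition by construction, completing the proof.

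The only nuance worth flagging is uniqueness. For $d_V = d_V^{(2)}$, the convex hull is closed and convex and the metric is induced by a strictly convex norm, so the projection is unique and $\widehat{\mathbf{X}}_t$ is well-defined. For $d_V = d_V^{(\infty)}$, the projection need not be unique, so the claim should be read as ``SCM weights yield \emph{a} projection''; the uniqueness of the SCM minimizer itself may then also fail, but any solution $\mathbf{w}_t$ to the optimization produces a valid projection point.
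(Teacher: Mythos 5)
Your proof is correct and follows essentially the same route as the paper's: identify the SCM feasible set with the convex hull of the donor covariates and observe that the optimization is then, by definition, a projection onto that set in the chosen metric. Your added remarks on existence (compactness) and non-uniqueness under $d_V^{(\infty)}$ are sensible refinements the paper omits, but they do not change the argument.
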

\begin{proof}
    Recall that the convex hull of $X = \{\mathbf{X}_1, \dots, \mathbf{X}_J\}$ for $\Xj \in \Rp$ is:
    \begin{equation*}
        conv(X) = \big\{\sum_{j=1}^J w_{jt} \Xj \mid \sum w_{jt} = 1, w_{jt} \geq 0 \text{ for } j = 1, \dots, J \big\}
    \end{equation*}
 See, e.g., \citet{boyd2004convex}.
    
    Recall also that the Euclidean projection of $\Xt$ onto the set $conv(X)$ in the norm $||\cdot||$ is defined as:
    \begin{align*}
        \argmin_\mathbf{u} &\hspace{2mm} ||\Xt - \mathbf{u}|| \\
        \text{subject to} &\hspace{2mm} \mathbf{u} \in conv(X)
    \end{align*}
    
    This optimization problem is equivalent to standard SCM.
\end{proof}
The covariates of the donor pool generally form a $p$-dimensional convex hull with a $p$ dimensional covariate,\footnote{Assuming that there are at least $p+1$ non-collinear donor-pool units. Otherwise, the dimension of the convex hull may be less than $p$.} i.e., a $p$-dimensional polygon (i.e., polytope) that contains the points corresponding to each of the donor pool units' covariates, as well as the lines between those points.
Geometrically, synthetic controls simply find the point in this convex hull that is ``closest'' to the treated unit's covariates (i.e., ``project'' the treated unit onto the convex hull), where ``closeness'' is defined using the given distance metric.
Importantly, every point in a convex hull can be written as a convex weighted average of the points generating the hull, so the ``closest'' point corresponds to a set of non-negative donor-pool-unit weights that sums to one.

To impute the counterfactual outcome for the treated unit (i.e., the outcome for the constructed synthetic control unit), synthetic controls linearly interpolate the donor-pool units' outcomes to the projected point.
Slightly more formally:
\begin{remark}
\label{rem:lin_interp}
    Suppose we use $Y_t^* \equiv \sum_{j \in \Ct} w_{jt} Y_j$ as our estimate of the counterfactual outcome for treated unit $t$, using control units $j$ with convex weights $w_{jt}$.
    Then we may interpret $Y_t^*$ as a linear interpolation of the control units' outcomes to $\sum_{j \in \Ct} w_{jt} \Xj$.
\end{remark}
\begin{proof}
    Write $Y_j = f(\Xj)$, without noise.
    If we suppose that $f(\cdot)$ is linear, by the definition of a linear map we have:
    \begin{align*}
        \sum_{j \in \Ct} w_{jt} Y_j
        &= \sum_{j \in \Ct} w_{jt} f(\Xj) \\
        &= f(\sum_{j \in \Ct} w_{jt} \Xj).
    \end{align*}
    I.e., $Y_j^*$ may be interpreted as the evaluation of $f(\cdot)$ at $\sum_{j \in \Ct} w_{jt} \Xj$.
    Convex weights ensure that this is interpolation, not extrapolation.
\end{proof}

In summary, synthetic controls linearly interpolate the donor-pool units' outcomes to the point on the convex hull closest to the treated unit.
As a result, they are subject to linear interpolation bias \citep{kellogg2021combining} in this first stage, though the bias is controlled by the maximum distance across which they are allowed to interpolate with the caliper.
Synthetic controls then flatly extrapolate this outcome to the treated unit, i.e., impute the treated unit's outcome as the linearly interpolated value.
This second step is subject to potential extrapolation bias from the point on the convex  hull to the location of the treated unit \citep{kellogg2021combining} proportional to $d_V(\Xt, \sum_{j \in \Ct} w_{jt} \Xj)$, assuming the potential outcome function is Lipschitz.
Both interpolation and extrapolation bias are therefore controlled by the caliper size in CSM.

\subsection{Optimization for finding SCMs}
\label{rem:linf_opt}

For a scaled $L_2$ distance, the SCM optimization is typically directly solved as a quadratic programming problem.
For a scaled $L_\infty$ distance, we have to transform the SCM optimization into the following linear programming problem, which we solve for each treated unit $t$ in turn:

\begin{align*}
    \text{minimize} &\hspace{2mm} y\\
    \text{subject to} &\hspace{2mm} -y \leq \Big[ V(\Xt - \sum_{j \in \Ct} w_{jt} \Xj) \Big]_k \leq y \text{ for } k = 1, \dots, p \\
    &\hspace{2mm} \sum_{j \in \Ct} w_{jt} = 1 \\
    &\hspace{2mm} 0 \leq w_{jt} \leq 1 \text{ for } j \in \Ct
\end{align*}

In words, we are trying to find the smallest $y \geq 0$ that ``pinches'' the covariate balance for each covariate.
The $w_{jt}$ are all free (under their constraints), so the goal of shrinking $y$ induces us to find $w_{jt}$ to make small $y$ achievable.

\subsection{Proof of bias reduction from synthetic controls (Proposition \ref{prop:scbiasbd})}
\label{app:scbiasbd}

Proving Proposition \ref{prop:scbiasbd} requires some additional mathematical exposition.
Specifically, we require $f_0(\cdot)$ to be differentiable in order to use a Taylor expansion.
That said, to effectively use the Lipschitz property, standard multivariable Taylor expansion does not suffice.
Instead, we require a Taylor expansion in a distance metric.

\begin{lemma}[Taylor expansion in a distance metric]
\label{lem:bias}
Suppose $f: \Rp \to \R$ is differentiable.
Let $\Xj, \Xt \in \Rp$, and write unit vector $\vj \equiv \frac{\Xj - \Xt}{d_V(\Xj, \Xt)}$ where  $d_V$ is $d_V^{(2)}$ or $d_V^{(\infty)}$, given by Equations \ref{eq:l2dist} or \ref{eq:linfdist}.
Then:
$$f(\Xj) = f(\Xt) + d_V(\Xj, \Xt) \nabla_{\vj} f(\Xt) + o(d_V(\Xj, \Xt))$$
where $\nabla_{\vj} f \equiv \nabla f \cdot \vj$ is the the directional derivative of $f$ in direction $\vj$
\end{lemma}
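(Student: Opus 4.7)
The plan is to reduce this to the standard multivariable first-order Taylor expansion at $\Xt$ and then re-express the linear term and the error term in the language of $d_V$.

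First I would parametrize the displacement: by the definition of $\vj$, we have the identity
\[
\Xj = \Xt + d_V(\Xj, \Xt)\, \vj,
\]
so writing $h := d_V(\Xj, \Xt)$ gives $\Xj - \Xt = h\vj$. Since $f$ is differentiable at $\Xt$ in the usual multivariable sense, the standard Taylor expansion yields
\[
f(\Xj) = f(\Xt) + \nabla f(\Xt)^\top (\Xj - \Xt) + o\big(\|\Xj - \Xt\|_2\big),
\]
where $\|\cdot\|_2$ is the ordinary Euclidean norm.

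Next I would rewrite the linear term using the directional derivative: because the directional derivative is bilinear in the direction,
\[
\nabla f(\Xt)^\top (\Xj - \Xt) = h\, \nabla f(\Xt)^\top \vj = d_V(\Xj,\Xt)\, \nabla_{\vj} f(\Xt).
\]
This handles the first-order term exactly; no approximation has been introduced here.

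The one point that needs care — and is the crux of the statement — is converting the remainder from $o(\|\Xj - \Xt\|_2)$ to $o(d_V(\Xj, \Xt))$. The hard part is making sure we respect whichever norm $d_V$ induces, but since $\Rp$ is finite dimensional, the norm induced by $d_V$ (either $\|\cdot\|_V^{(2)}$ or $\|\cdot\|_V^{(\infty)}$) is equivalent to $\|\cdot\|_2$: there exist constants $0 < C_1 \leq C_2 < \infty$ with $C_1 \|\mathbf{x}\|_2 \leq \|\mathbf{x}\|_V \leq C_2 \|\mathbf{x}\|_2$ for all $\mathbf{x}$. Consequently $o(\|\Xj - \Xt\|_2) = o(d_V(\Xj, \Xt))$ as $d_V(\Xj, \Xt) \to 0$. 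Substituting this and the rewritten linear term into the Taylor expansion gives exactly the claimed identity, completing the proof. The whole argument is short because the content of the lemma is essentially bookkeeping: differentiability in $\Rp$ is norm-independent, so the only real work is verifying the two equivalences (linear term via bilinearity of $\nabla_\vj f$, remainder via equivalence of norms).
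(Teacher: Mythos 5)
Your proposal is correct and follows essentially the same route as the paper: start from the standard Euclidean Taylor expansion, rewrite the linear term as $d_V(\Xj,\Xt)\,\nabla_{\vj} f(\Xt)$ exactly as the paper does, and then convert the remainder from $o(\|\Xj-\Xt\|_2)$ to $o(d_V(\Xj,\Xt))$. The only difference is in that last step, where you invoke equivalence of norms on $\Rp$ (valid here since $V$ is positive definite, so both $\|\cdot\|_V^{(2)}$ and $\|\cdot\|_V^{(\infty)}$ are genuine norms), which is a cleaner justification than the paper's explicit ratio-limit/L'Hospital computation and yields the same conclusion.
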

\begin{proof}
    By standard multivariate Taylor expansion, we know that:
    \begin{align*}
        f(\Xj) 
        &= f(\Xt) + (\Xj - \Xt)^T \nabla f(\Xt) + o(||\Xj - \Xt||)
    \end{align*}
    for the usual Euclidean norm $||\cdot||$.

    We then have, using our definition of our directional derivative 
\begin{align*}
    (\Xj - \Xt)^T \nabla f(\Xt) &= d_V(\Xj, \Xt)\frac{1}{d_V(\Xj, \Xt)} (\Xj - \Xt)^T \nabla f(\Xt) \\
     &= d_V(\Xj, \Xt) \nabla_{\vj} f(\Xt)
\end{align*}
    which gives
    \begin{align*}
        f(\Xj)
        &= f(\Xt) + d_V(\Xj, \Xt) \nabla_{\vj} f(\Xt) + o(||\Xj - \Xt||)
    \end{align*}
    
    Finally, for the little-$o$ remainder term, we show that if $g_a(x) = o(||x-a||)$, then $g_a(x) = o(d_V(x,a))$. We demonstrate this by proving that $\lim_{x \to a} \frac{g_a(x)}{d_V(x,a)} = 0$.
    \begin{align*}
        \lim_{x \to a} \frac{g_a(x)}{d_V(x,a)}
        &= \lim_{x \to a} \frac{g_a(x)}{||x-a||} \cdot \frac{||x-a||}{d_V(x,a)} \\ 
        &= \lim_{x \to a} \frac{g_a(x)}{||x-a||} \cdot \lim_{x \to a} \frac{||x-a||}{d_V(x,a)},
    \end{align*}
    if both limits exist.
    We know $\lim_{x \to a} \frac{g_a(x)}{||x-a||} = 0$ since $g_a(x) = o(||x-a||)$, so it remains to show that $\lim_{x \to a} \frac{||x-a||}{d_V(x,a)}$ exists.
    \begin{align*}
        \lim_{x \to a} \frac{||x-a||}{d_V(x,a)}
        &= \lim_{t \to 0} \frac{t}{d_V(a+tv,a)} &[tv=x-a, t = ||x-a||]\\
        &= \lim_{t \to 0} \frac{1}{\ddt d_V(a+tv, a)} &[\text{L'Hospital}] 
    \end{align*}
 The above limit exists because $\ddt d_V(a+tv, a) = \ddt d_V(tv, 0)  = \ddt t ||v||_V = ||v||_V$ is non-zero when $v \neq 0$.
 Hence $g_a(x) = o(d_V(x,a))$.
\end{proof}

Lemma \ref{lem:bias} is technical, but it captures a very simple intuition.
Standard multivariate Taylor expansion takes the dot product of the $p$-dimensional gradient of $f(\cdot)$ with the $p$-dimensional vector of differences between $\mathbf{x}$ and the point $a$.
Calipers, however, only control the scalar quantities $d(\Xt, \Xj)$.
Lemma \ref{lem:bias} simply rewrites standard multivariate Taylor expansion to better utilize the fact that Lipschitz functions control scalar directional derivatives (as discussed in Appendix \ref{app:lipschitz}).

With our lemma, we now provide the proof for Proposition \ref{prop:scbiasbd},
where we use the fact that, by Lemma \ref{lem:lipbdsdd}, Lipschitz functions have bounded directional derivatives.
\begin{proof}
    Recall that by Taylor expansion of $f_0(\cdot)$ around $\Xt$, we have:
    \begin{align*}
        \sum_j w_{jt} f_0(\Xj) - f_0(\Xt)
        &=  \sum_j w_{jt} d_V(\Xj, \Xt) \nabla_{\vj} f_0(\Xt) + \sum_j w_{jt} o(d_V(\Xj, \Xt)) \\
        &= \sum_j w_{jt} d_V(\Xj, \Xt) \nabla_{\vj} f_0(\Xt) + o(d_V)
    \end{align*}
    
    Consider the linear term:
    \begin{align*}
        \sum_j w_{jt}  d_V(\Xj, \Xt) \nabla_{\vj} f_0(\Xt)
        &= \sum_j w_{jt} \nabla f_0(\Xt)^T (\Xj - \Xt) &[\text{def. } \nabla_{\vj}]\\
        &= \sum_j w_{jt} \sum_{k=1}^p c_k (\mathbf{X}_{jk} - \mathbf{X}_{tk}) \\
        &= \sum_{k=1}^p c_k (\sum_j w_{jt} \mathbf{X}_{jk} - \mathbf{X}_{tk}) \\
        &= 0 &[\text{exact SC}]
    \end{align*}
    where $\mathbf{c} = [c_1 \dots c_p]^T$ is the fixed (unknown) gradient of $f_0(\cdot)$ at $\Xt$.
    Thus, if the synthetical control matches, we drop the linear term from our bias.

    Using the above we have, given that, for all $t$, $d_V(\Xt, \Xj) \leq c$ for all $j \in \Ct$:
    
    \begin{align*}
 \sum_t \frac{1}{N_t} \sum_j w_{jt} f_0(\Xj) - f_0(\Xt) =   \frac{1}{N_t}  \sum_t \sum_j w_{jt} o( d_V(X_j, X_t) ) =  \frac{1}{N_t} \sum_t \sum_j w_{jt} o( c ) = o( c ) .
    \end{align*}
\end{proof}

\end{document}